\documentclass[letterpaper, 10 pt, conference]{ieeeconf}  % Comment this line out if you need a4paper

\IEEEoverridecommandlockouts                              % This command is only needed if 
\usepackage{graphicx} % for pdf, bitmapped graphics files
\usepackage{amsmath} % assumes amsmath package installed
\usepackage{amssymb}  % assumes amsmath package installed
\usepackage{subcaption}
\usepackage{psfrag}
\usepackage{xcolor}%

\newtheorem{definition}{\bfseries Def.}
\newtheorem{theorem}{\bfseries Theorem}
\newtheorem{lemma}{\bfseries Lemma}

\let\emptyset\varnothing

\newtheorem{assumption}{\bf Assumption}

\title{\LARGE \bf
The Small-Talk Effect in Practical Synchronization\\ of Heterogeneous Oscillating Dynamics
}

\author{Finn Voland$^{1}$, Vincent Schmidtke$^{1}$, Zonglin Liu$^{1}$ and Olaf Stursberg$^{1}$% <-this % stops a space
\thanks{$^{1}$The authors are with the Institute of Control and System Theory, EECS, 
University of Kassel, Germany 
        {\tt\small $\{$uk089421, v.schmidtke, Z.Liu, stursberg$\}$@uni-kassel.de}}%
}

\begin{document}

\maketitle
%\thispagestyle{empty}
%\pagestyle{plain}

%%%%%%%%%%%%%%%%%%%%%%%%%%%%%%%%%%%%%%%%%%%%%%%%%%%%%%%%%%%%%%%%%%%%%%%%%%%%%%%%
\begin{abstract}

This paper investigates communication schemes in synchronization of  heterogeneous Liénard oscillator systems with time-triggered coupling. Existing results rely on global exchange of information at discrete time instants, providing limited flexibility in adjusting  the trade-off between communication rate and synchronization performance, while not directly extending to settings of multiple time scales. A control scheme is proposed in which global updates are complemented by more frequent local interactions within a sub-network. For a prescribed bound on the synchronization error, conditions are derived under which the global and local update rates can be selected as design parameters. This allows to ensure that trajectories remain in a neighborhood of the synchronized periodic orbit while satisfying the error bound always. A numerical example illustrates the beneficial effect of the local interactions.

\end{abstract}

%%%%%%%%%%%%%%%%%%%%%%%%%%%%%%%%%%%%%%%%%%%%%%%%%%%%%%%%%%%%%%%%%%%%%%%%%%%%%%%%
\section{INTRODUCTION}

The emergence of synchronized behavior in networks of dynamical systems is a fundamental phenomenon in biological and engineered systems. Examples include locomotion, circadian rhythms, and cardiac dynamics, which are commonly modeled using nonlinear coupled oscillators.
Among the existing classes of oscillator models, Liénard systems have proven particularly suitable for capturing such oscillatory and synchronous behaviors \cite{dutra2003modeling, rompala2007dynamics, kongas1999bifurcation, dos2004rhythm}. For sets of  Liénard systems, as well as for similar classes of oscillators, several investigations have shown that practical synchronization can be achieved by using mechanism of diffusive coupling, including the work in  \cite{hill2008global, montenbruck2015practical, kim2015robustness, 7809144, lee2020tool, lee2019behavior}. It is shown in  \cite{coraggio2020distributed} that stronger notion of synchronization, namely local asymptotic synchronization, can be obtain by introducing discontinuous sign-based coupling.
All these results refer to settings in which coupling signals are continuously present -- however, the realization of coupling requires significant effort (as use of energy or material for establishing the interaction of communication), the question arises to which extent can synchronization be achieved with only infrequent or periodic coupling. Along this line, \cite{tanwani2021lyapunov} proposed a time-triggered control scheme in which information is exchanged only at discrete time instants. By use of singular perturbation, the work shows that local practical synchronization is achieved if puls-like coupling is applied with sufficient frequency. The idea is extended in \cite{tanwani2024singularly} to allow for sequences of interaction graphs, ensuring synchronization under arbitrarily switching between interaction graphs if the union of all graphs is connected.

However, these approaches rely on the separation of time scales for the coupling and the system evolution, which does not directly extend to settings in which multiple intrinsic time scales are present in the oscillator dynamics. In addition, no quantitative insight is provided into how the structure of the coupling topology and the timing of discrete information exchange jointly affect the synchronization performance.

To address this gap, this paper introduces a time-triggered control scheme in which global information exchange is interleaved with additional local interactions among a subset of oscillators -- the latter is referred to as \textit{small-talk}. The considered concept of having multiple communication frequencies is inspired by literature on opinion dynamics \cite{Kravitzch2023, Bertotti2024}, in which individual-to-individual interactions are modeled by less frequent updates, while more frequent updates capture opinion formation driven by online communication. The potential positive influence of small-talk on information propagation is supported by \cite{Tump2024}, which demonstrates that earlier dissemination of information through a network yields greater downstream influence. Building on this, the present work investigates how the timing and distribution of small-talks affect and improve synchronization accuracy.

The paper is structured into Sec. 2 to introduce the setting and problem, while Sec. 3 presents the main result on practical synchronization, building on infrequent information exchange of all oscillators combined with additional intermediate updates for a subset of oscillators.  Section 4 provides a numerical example illustrating the effect of the proposed update scheme, and Sec. 5 concludes the paper.

\section{HYBRID NETWORK DYNAMICS WITH TWO-TIME-SCALE NETWORK UPDATES}

Consider a set of $N$ heterogeneous Liénard oscillators, where each
oscillator $i \in \mathcal{N} = \{1, \dots, N\}$ has a two-dimensional state
$(\omega_i(t), \xi_i(t))$ governed by the continuous-time dynamics:
\begin{equation}\label{eq:dyn}
\begin{aligned}
\dot{\omega}_i(t) &= -\omega_i(t) + \xi_i(t), \\
\dot{\xi}_i(t) &= \underbrace{(1 - f_i(\omega_i(t)))\bigl(-\omega_i(t) +
\xi_i(t)\bigr) - g_i(\omega_i(t))}_{=:\,\phi_i(\omega_i(t),\,\xi_i(t))},
\end{aligned}
\end{equation}
where the nonlinear term $\phi_i(\omega_i(t), \xi_i(t))$ is composed of the
damping function $f_i(\omega_i(t))$ and the restoring force function
$g_i(\omega_i(t))$. Starting from the initial time $t_0$,   the set of oscillators is assumed to exchange state  information at discrete times:
\begin{equation}\label{eq:Tg}
t_{g,k} \in\mathcal{T}_g = \left\{t_0 + k \Delta T,\;
k \in \mathbb{N}_0
\right\},
\end{equation}
with update interval $\Delta T > 0$.
The local information is exchanged over  an undirected and connected graph $\mathcal{G} = (\mathcal{V},\mathcal{E})$ with the set of vertices $\mathcal{V}=\mathcal{N}$ the set $\mathcal{E}\subseteq \mathcal{V}\times\mathcal{V}$ of edges. In the  adjacency matrix $A=[a_{ij}]$ of  $\mathcal{G}$ , $a_{ij}=1$ denotes that an edge exists between the oscillators $i$ and $j$ (thus exchange of information is possible), and $a_{ij}=0$ otherwise. Let $L$ denote the  Laplacian matrix of $\mathcal{G}$. It is known that $L$ is symmetric and positive semi-definite with a simple eigenvalue of zero, i.e., $\lambda_1(L)=0< \lambda_2(L) \leq \lambda_3(L) \leq \ldots \leq \lambda_N(L)$.
Following \cite{tanwani2021lyapunov}, the  information exchange between the oscillators at time $t_{g,k} \in \mathcal{T}_g$ is described  by:
\begin{equation}\label{eq:globup}
\omega_i(t^+_{g,k}) := \omega_i(t^-_{g,k}), 
\qquad
\xi_i(t^+_{g,k}) := \sum_{j=1}^N w_{ij}\,\xi_j(t^-_{g,k})
\end{equation}
and the constants $w_{ij}$ are the entries of the matrix:
\begin{equation}\label{eq:W}
W := I_N - \varphi L,~~\varphi>0.
\end{equation}
The positive constant $\varphi$ (also referred to as the coupling gain) is assumed to be  chosen to make $W$  a non-negative matrix.
The main objective in \cite{tanwani2021lyapunov, tanwani2024singularly} is to investigate whether a set of heterogeneous oscillators can be synchronized through the information exchange described by \eqref{eq:globup}. The authors model the oscillator dynamics by a hybrid system, combining the continuous dynamics in \eqref{eq:dyn} with the discrete updates in \eqref{eq:globup}. If the updates occur infinitely often within a finite time horizon ($\Delta T \to 0$), the oscillators are guaranteed to asymptotically synchronize to the limit cycle $\mathcal{A}_0 \subset \mathbb{R}^2$ (if this exists) of the following second-order dynamics (also referred to as the \emph{averaged dynamics}):
\begin{align} \label{eq:avgdynamics} 
    \begin{bmatrix}
        \dot{w} \\ \dot{\xi}
    \end{bmatrix}\hspace{-0.8mm} =\hspace{-0.8mm} \begin{bmatrix}
        -\omega +\xi \\
   (1 - \frac{1}{N}\sum\limits_{i=1}^N f_i(w) (-w +\xi) - \frac{1}{N}\sum\limits_{i=1}^N g_i(w)  )
    \end{bmatrix}.
\end{align}
By treating $\Delta T$ as a perturbation parameter, the method of singular perturbation analysis (see \cite{khalil2002nonlinear}) is adopted in \cite{tanwani2021lyapunov}. For $\Delta T$ smaller than an upper bound  $\Delta T_{max}$, it is shown that the oscillators with heterogenous dynamics can be practically synchronized to a bounded neighborhood of $\mathcal{A}_0$, i.e., it holds with an $\eta>0$ and the distance:
\begin{align}  
| (w_i(t), \xi_i(t) ) |_{\mathcal{A}_0}:= \inf_{x \in \mathcal{A}_0}||x -  \begin{bmatrix}   w_i(t) \\ \xi_i(t) \end{bmatrix}  ||  \label{eq:boundA03}
\end{align}
between $ \begin{bmatrix}   w_i(t) \\ \xi_i(t) \end{bmatrix}$ and the  limit cycle $\mathcal{A}_0$ that:
\begin{align}  
&\underset{t \to \infty}{\text{lim sup}} |    (w_i(t), \xi_i(t) ) |_{\mathcal{A}_0} \le \eta,~~\forall~ i \in \{1,\dots,N\}  \label{eq:boundA02}.
\end{align}
\begin{figure}[t]
    \centering
    \begin{subfigure}{0.56\columnwidth}
        \centering
        \psfrag{G}[][]{$\mathcal{G}$}
        \psfrag{G1}[][t]{\textcolor{red}{$\mathcal{G}_1$}}
        \includegraphics[width=\linewidth]{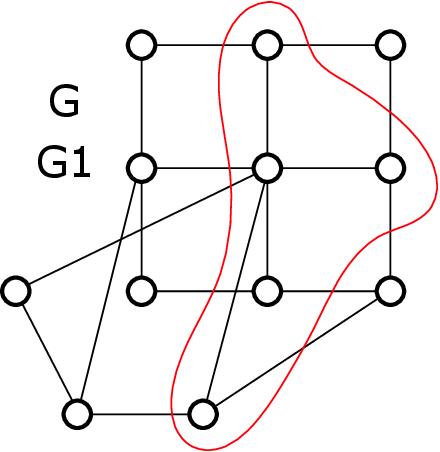}
        \caption{Network of 12 oscillators. Small-talk updates care possible in subgraph $\mathcal{G}_1$.}
        \label{fig:graph}
    \end{subfigure}
    \hfill
    \begin{subfigure}{0.4\columnwidth}
        \centering
        \psfrag{g}[][b]{$\Delta T$}
        \psfrag{l}[][]{$\tau$}
        \psfrag{t}[][]{$t$}
        \psfrag{t0}[][]{$t_0$}
        \includegraphics[width=\linewidth]{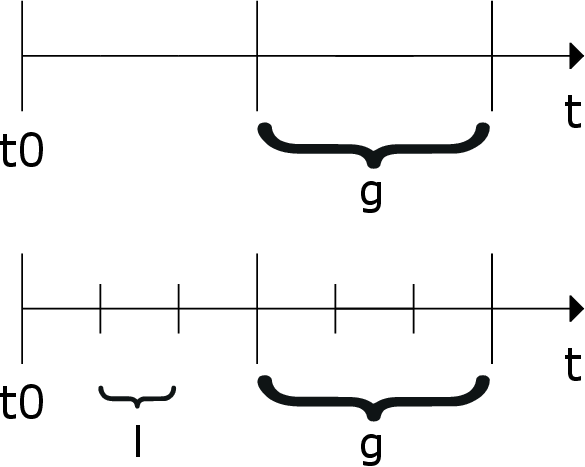}
        \caption{Top: Only regular updates every $\Delta T$. Bottom: Small-talk updates in $\mathcal{G}_1$ every $\tau$, with regular updates after $\mu=3$ small-talk updates.}
        \label{fig:updates}
    \end{subfigure}
    \caption{Time-triggered update scheme with regular updates in $G$ and intermediate small-talk updates in $G_1$. }
    \label{fig:combined}
\end{figure}
In addition to the regular updates defined by \eqref{eq:globup} and used in \cite{tanwani2021lyapunov}, this paper considers the existence of a subset of oscillators that can exchange information along existing edges in $\mathcal{E}$ at a frequency higher than the regular updates. This update scheme is referred to as \emph{small-talk updates}, and is illustrated in Fig.~\ref{fig:graph}. Such an additional update mechanism is can be observed in biological \cite{NUNEZ201675} or social systems \cite{Kravitzch2023}, in which certain subgroups may exchange signals with higher frequency in case of close proximity.

Formally, let the set $\mathcal{N}$ be decomposed into two subsets $ \mathcal{N}_1$ and $\mathcal{N}_2$ such that $\mathcal{N} = \mathcal{N}_1 \cup \mathcal{N}_2$ and $ \mathcal{N}_1 \cap \mathcal{N}_2 = \emptyset$. The subset $\mathcal{N}_1$  is assumed to contain   $N_1$ oscillators capable of generating small-talk updates, while  $\mathcal{N}_2$ contains the remaining oscillators. Assuming that small-talk updates must use existing edges in $\mathcal{E}$, the Laplacian matrix $L$ of the original graph  $\mathcal{G}$ can be decomposed as follows:
\begin{equation}\label{eq:laplaciandecompse}
L =
\begin{bmatrix}
L_{11} & L_{12} \\
L_{21} & L_{22}
\end{bmatrix},
\end{equation}
in which the submatrix $L_{11}\in \mathbb{R}^{N_1 \times N_1}$ corresponds to the oscillators in $\mathcal{N}_1$, and $L_{22}\in \mathbb{R}^{N_2 \times N_2}$ corresponds to  those in $\mathcal{N}_2$. The set $\mathcal{T}_l $ of discrete time $t_{l}$, at which small talk takes place, is defined by:
\begin{equation}\label{eq:smalltalktime}
\mathcal{T}_l =
\left\{
t_{g,k} + m\tau \;:\;
k \in \mathbb{N}_0,\;
m = 1,\dots,\mu-1
\right\}
\end{equation}
with $\tau>0$ denoting the interval between two small talks. For $\mu \in \mathbb{Z}^+$ applies  $\mu \tau = \Delta T$, i.e., a total of $\mu-1$ small-talks occur between two consecutive regular  updates, see Fig.~\ref{fig:updates}.  The local states $\xi_i$ for all $i \in \mathcal{N}_1$ are collected in a new vector $\xi_{\mathcal{N}_1} \in \mathbb{R}^{N_1} $. Since  only oscillators in $\mathcal{N}_1$ can exchange their information, the state update according to  \eqref{eq:globup} in time  $t_{l} \in \mathcal{T}_l$ can be expressed by:
\begin{align}
&\omega_i(t^+_{l,k}) := \omega_i(t^-_{l,k}) \label{eq:localup1} \\
&\xi_{\mathcal{N}_1} (t^+_{l,k}) := \xi_{\mathcal{N}_1} (t^-_{l,k}) - \varphi P L_{11}\xi_{\mathcal{N}_1} (t^-_{l,k}) \label{eq:localup2}\\
&\xi_i(t^+_{l,k}) := \xi_i(t^-_{l,k}), ~i \in \mathcal{N}_2  \label{eq:localup3}
\end{align}
with $P= I_{N_1}-\frac{1}{N_1}1_{N_1}1_{N_1}^T$. With these preparations, the problem to be investigated in this paper is now defined as follows:
  \begin{definition}{(Problem Definition)}\label{des:def1}
Given the networked oscillators with continuous dynamics in \eqref{eq:dyn}, regular updates in  \eqref{eq:globup}, and small-talk updates in \eqref{eq:localup1} - \eqref{eq:localup3}, the task is to investigate how the intervals  $\Delta T$  and  $\tau$ of the regular and small-talk updates jointly influence the synchronization error. In particular, given a bounded neighborhood $\Omega$ of the synchronized behavior of the oscillator network, the goal is to determine conditions for $\tau$ and $\Delta T$ such that the oscillator trajectories remain in $\Omega$ for all future time if they are initialized in $\Omega$.
\end{definition}

When applying singular perturbation analysis to the problem at hand, a second perturbation parameter relating to $\tau$ is required in addition to $\Delta T$. The presence of two perturbation parameters complicates the evaluation of how $\Delta T$ and $\tau$ jointly affect the bound of the synchronization error, or conversely, how the maximally permissible values of $\Delta T$ and $\tau$ for a given  error bound depend on each other. Therefore,  an alternative approach (to perturbation analysis) is adopted in the following section.

\section{PRACTICAL SYNCHRONIZATION UNDER ADDITIONAL SMALL-TALK UPDATES}

\subsection{Transformation of the  Liénard Oscillator}

To solve the problem in  Def.~\ref{des:def1}, a  matrix 
$R \in \mathbb{R}^{N \times (N-1)}$ satisfying:
\begin{align} \label{eq:requirementR}
R^\top R = I_{N-1}, ~~1_N^\top R = 0_{1 \times (N-1)}
\end{align}
is introduced first. Therein, $1_N$ denotes the column vector containing only ones. The local states $\xi_i$ for all $i \in \mathcal{N}$ are collected in a vector $\xi_{\mathcal{N}} \in \mathbb{R}^{N}$, and:
\begin{align} 
&s := \frac{1}{N} \sum_{i=1}^N \xi_i,~~s \in \mathbb{R} \label{eq:transform2} \\
& \zeta := R^\top \xi,~~\zeta \in \mathbb{R}^{N-1} \label{eq:transform3} \\
 & \xi_{\mathcal{N}} = s 1_N + R \zeta. \label{eq:transform1}
\end{align}
As stated in \cite{tanwani2021lyapunov}, the scalar $s$ denotes the average of all  $\xi_i$, while
the  vector $\zeta$ represents the disagreement among  the local $\xi_i$, $i \in \mathcal{N}$ in the sense of deviations from the average (see the example in Sec.~\ref{example}).
Further define a vector  $\omega_{\mathcal{N}} := [\omega_1, \ldots, \omega_N]^T$ , its dynamics, as well as the dynamics  of  $s(t)$ and $ \zeta(t)$ based on \eqref{eq:dyn} as follows:
\begin{align}
&  \dot{\omega}_{\mathcal{N}}(t) =-\omega_{\mathcal{N}}(t)  +1_Ns(t)  + R\zeta(t) \label{eq:transformdynamic1} \\
& \dot{s}(t)= \hspace{-1mm}\frac{1}{N}1_N^T \bigg(\phi(\omega_{\mathcal{N}}(t) ,1_Ns(t) )\hspace{-1mm}  \notag \\
&~~~~~~~~~~~~~~~~+\hspace{-1mm} (I_N \hspace{-1mm}- \hspace{-1mm}\text{diag}(f(\omega_{\mathcal{N}}(t) )))R\zeta(t) \bigg)  \label{eq:transformdynamic2}\\
&   \dot{\zeta}(t) = R^T \phi(\omega_{\mathcal{N}}(t),1_Ns(t))  \notag \\
&~~~~~~~~~~~~~~~~+R^T(I_N -diag(f(\omega_{\mathcal{N}}(t) )))R \zeta(t) \label{eq:transformdynamic3}
\end{align}
with:
\begin{gather}
    \phi(\omega_{\mathcal{N}}, \xi) = \begin{bmatrix}
        \phi_1(\omega_1, \xi_1) \\
        \vdots \\
        \phi_N(\omega_N, \xi_N)
    \end{bmatrix}, ~
    f(\omega_{\mathcal{N}}) = \begin{bmatrix}
        f_1(\omega_1) \\ \vdots \\ f_N(\omega_N)
    \end{bmatrix}. 
\end{gather}
For abbreviation, the dynamics of  $\omega_{\mathcal{N}}(t)$ and $s(t)$  in \eqref{eq:transformdynamic1} and \eqref{eq:transformdynamic2} 
is also referred to by:
\begin{align}\label{eq:transformdynamicbrevity}
\begin{bmatrix}  \dot{\omega}_{\mathcal{N}}(t) \\ \dot{s}(t) 
  \end{bmatrix} = F(\omega_{\mathcal{N}}(t),s(t), \zeta(t)).
\end{align}
Based on \eqref{eq:transform2} and \eqref{eq:transform3}, the following equations with $\Lambda_g := R^T W R$ hold true 
for the regular update at each time  $t_{g,k} \in \mathcal{T}_g$:
\begin{align} \label{eq:global_jumps_overall_network}
    & 
    \begin{cases} 
    %    \omega(t^+_{g,k}) = \omega(t^-_{g,k}) \\
        s(t^+_{g,k}) = s(t^-_{g,k}), \\
        \zeta(t^+_{g,k}) = \Lambda_g \zeta(t^-_{g,k}). 
    \end{cases} 
\end{align}
The coupling gain $\varphi$ in \eqref{eq:W} is chosen such that $0 <\varphi <\frac{1}{\lambda_N(L)}$ to ensure that the matrix $\Lambda_g$ is positive definite and  Schur stable.
For the values of $s(t_l)$ and $\zeta(t_l)$ for  small-talk updates in $t_{l} \in \mathcal{T}_l$, the following lemma is formulated:
\begin{lemma} \label{l:localup}
Suppose that  $L_ {11}$ in \eqref{eq:laplaciandecompse} satisfies  $\sum_{j=1}^{N_1} [L_{11}]_{ij} = c$ for a $c \in \mathbb{R}$ and for all $i \in \{1, ..., N_1\}$. Then,  the  following equations hold  for each  $t_{l} \in \mathcal{T}_l$:
 \begin{align} \label{eq:local_jumps_overall_network}
    \begin{cases}
        s(t^+_{l,k}) = s(t^-_{l,k}) \\
        \zeta(t^+_{l,k}) = \Lambda_l \zeta(t^-_{l,k})
    \end{cases} 
    \end{align} 
 with:
  \begin{align}
  \Lambda_l := (I_{N-1} - \varphi R^T \begin{bmatrix} P L_{11} & 0_{N_1 \times N_2} \\ 0_{N_2 \times N_1} & 0_{N_2 \times N_2} \end{bmatrix} R). \label{eq:lambdaL} \\
  \notag
     \end{align} 
\end{lemma}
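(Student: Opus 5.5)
Our plan is to write the small-talk jump for the full vector $\xi_{\mathcal{N}}$ as one linear map and then push it through the change of coordinates \eqref{eq:transform2}--\eqref{eq:transform1}. Order the oscillators so that $\mathcal{N}_1$ comes first. Then \eqref{eq:localup2}--\eqref{eq:localup3} read
\begin{align*}
\xi_{\mathcal{N}}(t^+_{l,k}) = (I_N - \varphi M)\,\xi_{\mathcal{N}}(t^-_{l,k}),\qquad
M := \begin{bmatrix} P L_{11} & 0_{N_1\times N_2} \\ 0_{N_2\times N_1} & 0_{N_2\times N_2}\end{bmatrix}.
\end{align*}
The $\omega$-components are unchanged by \eqref{eq:localup1}, so only $s$ and $\zeta$ need to be tracked.

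For $s$, we compute $s(t^+)-s(t^-) = -\frac{\varphi}{N}1_N^\top M\xi_{\mathcal{N}}$. By the block structure, $1_N^\top M = [\,1_{N_1}^\top P L_{11},\ 0\,]$. Since $1_{N_1}^\top P = 1_{N_1}^\top - \frac{1}{N_1}N_1 1_{N_1}^\top = 0$, we get $1_N^\top M = 0$, and hence $s(t^+_{l,k}) = s(t^-_{l,k})$. This step uses only the projector $P$ and needs no assumption on $L_{11}$.

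For $\zeta$, we apply $R^\top$ and substitute \eqref{eq:transform1}:
\begin{align*}
\zeta(t^+) = R^\top(I_N-\varphi M)(s1_N + R\zeta) = s\,R^\top 1_N + \zeta - \varphi s\,R^\top M 1_N - \varphi R^\top M R\,\zeta .
\end{align*}
Here we used $R^\top R = I_{N-1}$. The term $R^\top 1_N$ vanishes by \eqref{eq:requirementR}.

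The main obstacle is the cross term $\varphi s\,R^\top M 1_N$, which couples $s$ into $\zeta$; the row-sum hypothesis is exactly what removes it. We have $M1_N = [\,P L_{11}1_{N_1};\ 0\,]$, and by assumption $L_{11}1_{N_1} = c\,1_{N_1}$. Since $P 1_{N_1} = 0$, it follows that $M1_N = 0$. What remains is $\zeta(t^+_{l,k}) = (I_{N-1}-\varphi R^\top M R)\zeta(t^-_{l,k}) = \Lambda_l\zeta(t^-_{l,k})$, which is \eqref{eq:lambdaL}.

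Finally, we note that without the constant row-sum condition $P L_{11}1_{N_1}\neq 0$ in general. The update of $\zeta$ would then carry an affine term in $s$, so it would not take the stated form \eqref{eq:local_jumps_overall_network}.
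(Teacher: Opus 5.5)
Your proof is correct and follows the same route as the paper, which only sketches it: write the small-talk jump as a linear map on $\xi_{\mathcal{N}}$, substitute $\xi_{\mathcal{N}} = s1_N + R\zeta$, and use the row-sum hypothesis to eliminate the term coupling $s$ into $\zeta$. The paper states the key fact as the commutation $PL_{11}=L_{11}P$, which yields $PL_{11}1_{N_1}=L_{11}P1_{N_1}=0$; this is the same identity you obtain directly from $L_{11}1_{N_1}=c\,1_{N_1}$ and $P1_{N_1}=0$, and you write out the details the paper leaves implicit.
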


This lemma  follows from the update rules in \eqref{eq:localup1} to \eqref{eq:localup3} and the fact that $PL_{11} = L_{11}P$ exists (i.e., the matrices  $P$ and $L_{11}$ commute) if  $\sum_{j=1}^{N_1} [L_{11}]_{ij} = c$ holds for all $i \in \{1, ..., N_1\}$.

\begin{lemma} \label{l:LambdaL}
For $0 < \varphi < \frac{1}{\lambda_{N}(L)}$, the matrix $\Lambda_l$ defined in \eqref{eq:lambdaL} is orthogonally diagonalizable and has $N_2$ eigenvalues equal to one, while the remaining 
$N_1-1$ eigenvalues lie strictly within the open interval  $(0,1)$. 
\end{lemma}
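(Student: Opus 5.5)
Write $\Lambda_l = I_{N-1} - \varphi R^\top M R$ with
\[
M := \begin{bmatrix} P L_{11} & 0 \\ 0 & 0 \end{bmatrix}.
\]
The spectral structure of $\Lambda_l$ follows from that of $R^\top M R$, so the plan is to work with $M$.

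\textbf{Step 1: $M$ is symmetric and positive semi-definite.} Under the row-sum hypothesis of Lemma~\ref{l:localup}, $P$ and $L_{11}$ commute. Both are symmetric, so $PL_{11}$ is symmetric. Since $P$ is an orthogonal projector, $PL_{11} = P L_{11} P$ is PSD, because $L_{11}$ is a principal submatrix of the PSD matrix $L$ and hence PSD. Consequently $R^\top M R$ is symmetric PSD, and $\Lambda_l$ is orthogonally diagonalizable with eigenvalues $1-\varphi\mu_j$, where the $\mu_j \geq 0$ are the eigenvalues of $R^\top M R$.

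\textbf{Step 2: the unit eigenvalues.} The columns of $R$ together with $1_N/\sqrt N$ form an orthonormal basis, so $RR^\top = I_N - \frac1N 1_N1_N^\top$. For $z \in \mathbb{R}^{N-1}$ we have $R^\top M R z = 0$ iff $MRz = 0$, because $M \succeq 0$ and $z^\top R^\top M R z = 0$ forces $M^{1/2}Rz = 0$. Set $x = Rz \in 1_N^\perp$ and split $x = (x_1, x_2)$. Then $Mx = 0$ iff $PL_{11}x_1 = 0$. I will show that $\ker(PL_{11}) = \operatorname{span}(1_{N_1})$ whenever $N_2 \ge 1$, i.e.\ that $\ker(PL_{11})$ is exactly one-dimensional.
\begin{itemize}
\item $L_{11}$ is a grounded Laplacian. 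Connectivity of $\mathcal{G}$ makes $L_{11} = L_{\mathcal{G}_1} + D$ with $D \ge 0$ diagonal and $D \ne 0$ on every component of $\mathcal{G}_1$ linked to $\mathcal{N}_2$.
\item The row-sum condition says $D = cI$. Since some node is linked to $\mathcal{N}_2$, $c>0$, so $L_{11}$ is positive definite.
\item $L_{11}$ commutes with $P$ and $1_{N_1}$ is an eigenvector of $L_{11}$ (eigenvalue $c$). Hence $L_{11}$ maps $1_{N_1}^\perp$ into itself bijectively, and $PL_{11}$ has kernel exactly $\operatorname{span}(1_{N_1})$.
\end{itemize}
Thus $\ker M = \{(\alpha 1_{N_1}, x_2)\}$ has dimension $N_2+1$. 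Intersecting with $1_N^\perp$ gives dimension $N_2$, since $1_N \in \ker M$ and the constraint is nontrivial. Because $R$ is an isometry onto $1_N^\perp$, $\dim\ker(R^\top M R) = N_2$. These $N_2$ kernel directions give exactly $N_2$ eigenvalues of $\Lambda_l$ equal to one.

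\textbf{Step 3: the remaining eigenvalues lie in $(0,1)$.} The other $N_1-1$ eigenvalues of $R^\top M R$ satisfy $\mu_j>0$, so $1-\varphi\mu_j<1$. For the lower bound, Cauchy interlacing under the isometry $R$ gives $\mu_j \le \lambda_{\max}(M) = \lambda_{\max}(PL_{11}) \le \lambda_{\max}(L_{11})$. The last quantity is at most $\lambda_N(L)$, again by interlacing for the principal submatrix $L_{11}$ of $L$. Hence $\varphi\mu_j \le \varphi\lambda_N(L) < 1$, i.e.\ $1-\varphi\mu_j > 0$.

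\textbf{Main obstacle.} The delicate step is the kernel count in Step 2. It relies on $L_{11}$ being nonsingular on $1_{N_1}^\perp$, which in turn needs $\mathcal{G}_1$ connected or $c>0$. I would settle it as above, by combining the commuting structure (the constant-row-sum hypothesis) with connectivity of $\mathcal{G}$ to get $c>0$ whenever $N_2\ge1$. If $c=0$ were possible ($\mathcal{N}_1$ disconnected from $\mathcal{N}_2$), then connectivity of $\mathcal{G}$ would force $N_2 = 0$, a trivial case. In that case $L_{11}=L$ is connected and its kernel is $\operatorname{span}(1_{N_1})$, so the same count still holds.
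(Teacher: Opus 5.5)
Your proof is correct and follows essentially the same route as the paper. Both arguments get symmetry (hence orthogonal diagonalizability) from the commutation of $P$ and $L_{11}$, get nonsingularity of the grounded Laplacian $L_{11}$ from connectivity of $\mathcal{G}$, transfer the one-dimensional kernel $\operatorname{span}(1_{N_1})$ of $PL_{11}$ to $R^\top M R$ through the orthogonal complement of $1_N$, and bound the remaining eigenvalues by Rayleigh-quotient/Cauchy interlacing. The only differences are minor. You count $\dim(\ker M\cap 1_N^\perp)=N_2$ directly, using $D=cI$ with $c>0$, where the paper invokes Sylvester's law of inertia and the orthogonal similarity with $[R,\tfrac{1}{\sqrt N}1_N]$. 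You also treat the degenerate case $N_2=0$ explicitly.
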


The proof to Lemma~\ref{l:LambdaL} is stated in the Appendix.

%%%%%%%%%%%%%%%%%%%%%%%%%%%%%%%%%%%%%%%%%%%%%%%%%%%%%%%%%%%%%%%%%%%
\subsection{Evolution of  $\begin{bmatrix}
        \omega_{\mathcal{N}}(t) \\ s(t)
    \end{bmatrix}$ for  $\zeta(t) =0$}

Let the averaged dynamics \eqref{eq:avgdynamics} satsify:
\begin{assumption}\label{assmp:avg}
  The averaged dynamics in \eqref{eq:avgdynamics}  has an exponentially stable limit cycle  $\mathcal{A}_0$.
\end{assumption}
Under this assumption, the following result is provided:
\begin{lemma} \label{l:highLC}
Suppose that $\zeta(t) =0$ holds  for all $t \ge t_0$ in \eqref{eq:transformdynamicbrevity}. Then,  the resulting dynamics:
  \begin{align}\label{l:highLCzeta0}
    \begin{bmatrix}
        \dot{\omega}_{\mathcal{N}}(t) \\ \dot{s}(t)
    \end{bmatrix} =F(\omega_{\mathcal{N}}(t),s(t), 0).
  \end{align}
has an exponentially stable limit cycle $\mathcal{A} \in \mathbb{R}^{N+1}$ and satisfies:
  \begin{align}\label{eq:globalLC}
\mathcal{A}= \{ (\chi,\ldots, \chi, s ) \in \mathbb{R}^{N+1}: (\chi, s) \in \mathcal{A}_0 \}.\\
\notag
     \end{align}
\end{lemma}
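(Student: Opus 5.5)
With $\zeta\equiv 0$, equation \eqref{eq:transformdynamic1} reads $\dot\omega_{\mathcal N}=-\omega_{\mathcal N}+1_N s$, and \eqref{eq:transformdynamic2} reads $\dot s=\frac1N\sum_{i=1}^N\phi_i(\omega_i,s)$. The plan is to view \eqref{l:highLCzeta0} as a cascade. The disagreement part of $\omega_{\mathcal N}$ is exponentially contracting. The restriction to the consensus subspace $\{\omega_1=\dots=\omega_N\}$ is exactly the averaged dynamics \eqref{eq:avgdynamics}. Exponential stability of $\mathcal A$ will then follow from Assumption~\ref{assmp:avg} together with a standard cascade or reduction argument.

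\textbf{Step 1 (coordinates).} Write $\omega_{\mathcal N}=\chi 1_N+R\eta$ with $\chi=\frac1N 1_N^\top\omega_{\mathcal N}$ and $\eta=R^\top\omega_{\mathcal N}$, using \eqref{eq:requirementR}. Since $R^\top 1_N=0$, we get $\dot\eta=-\eta$ and $\dot\chi=-\chi+s$. Hence $\eta(t)=e^{-(t-t_0)}\eta(t_0)$, and the set $M=\{\eta=0\}$ is invariant and globally exponentially attractive.

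\textbf{Step 2 (dynamics on $M$).} On $M$ we have $\omega_i=\chi$ for all $i$. Substituting the definition of $\phi_i$ gives
\[
\dot s=\tfrac1N\textstyle\sum_i\big[(1-f_i(\chi))(-\chi+s)-g_i(\chi)\big].
\]
Together with $\dot\chi=-\chi+s$, this is exactly \eqref{eq:avgdynamics}. By Assumption~\ref{assmp:avg}, this planar system has the exponentially stable limit cycle $\mathcal A_0$. Its image in $\mathbb R^{N+1}$ under $(\chi,s)\mapsto(\chi 1_N,s)$ is precisely the set $\mathcal A$ in \eqref{eq:globalLC}, and $\mathcal A$ is a periodic orbit of \eqref{l:highLCzeta0}.

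\textbf{Step 3 (stability of $\mathcal A$ in the full space).} Write the full system as
\[
\begin{bmatrix}\dot\chi\\ \dot s\end{bmatrix}=F_0(\chi,s)+h(\chi,s,\eta),\qquad \dot\eta=-\eta,
\]
where $F_0$ is the averaged vector field and $h(\chi,s,\eta)=\frac1N\sum_i[\phi_i(\chi+(R\eta)_i,s)-\phi_i(\chi,s)]$ in the second component. Assuming $f_i,g_i$ are locally Lipschitz (continuously differentiable suffices), $|h|\le L_h|\eta|$ on compact sets. Exponential stability of $\mathcal A_0$ gives, by a converse Lyapunov theorem for compact attractors, a function $V_0$ satisfying
\[
c_1|x|_{\mathcal A_0}^2\le V_0\le c_2|x|_{\mathcal A_0}^2,\qquad \dot V_0\le -c_3|x|_{\mathcal A_0}^2,\qquad |\nabla V_0|\le c_4|x|_{\mathcal A_0}
\]
in a neighborhood of $\mathcal A_0$. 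Take the composite function $V=V_0(\chi,s)+\kappa|\eta|^2$. Its derivative is bounded by $-c_3|x|^2_{\mathcal A_0}+c_4L_h|x|_{\mathcal A_0}|\eta|-2\kappa|\eta|^2$. Choosing $\kappa$ large makes this negative definite by Young's inequality. Since $|(\omega_{\mathcal N},s)|_{\mathcal A}$ is equivalent to $|(\chi,s)|_{\mathcal A_0}+|\eta|$ (because $R$ is an isometry onto $1_N^\perp$), $\mathcal A$ is locally exponentially stable.

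\textbf{Main obstacle.} The delicate part is Step 3: invoking a converse Lyapunov theorem for an exponentially stable compact set (the limit cycle) rather than an equilibrium, and keeping everything local, since $L_h$ only holds on a compact neighborhood. If the converse theorem is to be avoided, an alternative is a Poincaré-map argument. One would check that the Floquet multipliers of $\mathcal A$ consist of those of $\mathcal A_0$ plus $N-1$ multipliers equal to $e^{-T_0}$, where $T_0$ is the period; these come from the block-triangular linearization with $\dot\eta=-\eta$. All nontrivial multipliers then lie inside the unit circle, which yields exponential orbital stability.
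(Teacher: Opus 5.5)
Your proposal is correct and follows essentially the same route as the paper. Both proofs split $\omega_{\mathcal N}$ via $R$ into an average $\chi$ and a disagreement component (your $\eta$, the paper's $\tilde\chi$), compare $|(\omega_{\mathcal N},s)|_{\mathcal A}$ with $|(\chi,s)|_{\mathcal A_0}$ plus the disagreement norm, and conclude exponential stability from a composite Lyapunov function $W(\chi,s)+k\|\tilde\chi\|^2$ built on a Lyapunov function for $\mathcal A_0$; your write-up is in fact more explicit than the paper's on the decoupled equation $\dot\eta=-\eta$, on the Young-inequality treatment of the cross term, and on the distance relation holding only up to constants because $\|1_N\|=\sqrt N$.
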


Note that in  \cite{tanwani2021lyapunov}, the authors only claim that $\mathcal{A}$ is asymptotically stable, whereas  the proof showing that $\mathcal{A}$ is also   exponentially stable is provided in the Appendix.  An implication of Lemma~\ref{l:highLC} is that for  $\zeta(t)=0$  and if the local states $\omega_i(t)$, $i \in \mathcal{N}$ and $s(t)$, are all initialized to a neighborhood of  $\mathcal{A}$, then they will exponentially converge to  $\mathcal{A}$, i.e.  $\underset{t \to \infty}{\text{lim sup}} |(\omega_{\mathcal{N}}(t),  s(t))|_{\mathcal{A}} =0$. Furthermore, based on the definition of  $\mathcal{A}$ in \eqref{eq:globalLC}, it holds that 
$\underset{t \to \infty}{\text{lim sup}} |    (w_i(t), \xi_i(t) ) |_{\mathcal{A}_0}  =0$ for all  $i \in \mathcal{N}$.

Since the limit cycle $\mathcal{A}$ is  exponentially stable, the converse theorem  \cite{hauser1994converse} implies the existence of a Lyapunov function $V: \mathbb{R}^{N} \times \mathbb{R} \to \mathbb{R}_{\ge 0}$ satisfying:
  \begin{align}
& k_1 |(\omega_{\mathcal{N}},  s)|^2_{\mathcal{A}} \leq V(\omega_{\mathcal{N}},  s)  \leq  k_2 |(\omega_{\mathcal{N}},  s)|^2_{\mathcal{A}} \label{eq:lya1} \\
& \dot{V}(\omega_{\mathcal{N}},  s) _{\text{along \eqref{l:highLCzeta0}}} \leq -k_3 |(\omega_{\mathcal{N}},  s)|^2_{\mathcal{A}} \label{eq:lya2} \\
&\|\nabla V(\omega_{\mathcal{N}},  s) \| \leq k_4 |(\omega_{\mathcal{N}},  s)|_{\mathcal{A}} \label{eq:lya3}
       \end{align}
for all vectors $ \begin{bmatrix}
        \omega_{\mathcal{N}} \\ s
    \end{bmatrix}$ located in
a closed tubular neighborhood around the limit cycle $\mathcal{A}$:
\begin{equation}\label{l:tubular}
    \Omega = \{    \begin{bmatrix}
        \omega_{\mathcal{N}} \\ s
    \end{bmatrix} \in \mathbb{R}^{N+1}:  |(\omega_{\mathcal{N}},  s)|_{\mathcal{A}} \leq r\},~~r>0
\end{equation}

%%%%%%%%%%%%%%%%%%%%%%%%%%%%%%%%%%%%%%%%%%%%%%%%%%%%%%%%%%%%
\subsection{Evolution of  $\begin{bmatrix}
        \omega_{\mathcal{N}}(t) \\ s(t)
    \end{bmatrix}$ for  $\zeta(t)  \ne 0$}

For $\zeta(t)  \ne 0$ in   \eqref{eq:transformdynamicbrevity}, i.e., there exists disagreement among  the local $\xi_i(t)$, $i \in \mathcal{N}$,
the following lemma describes how $\begin{bmatrix}
        \omega_{\mathcal{N}}(t) \\ s(t)
    \end{bmatrix}$ evolves over time when it is initialized   near  the  limit cycle $\mathcal{A}$ of \eqref{l:highLCzeta0}:

\begin{lemma} \label{l:preserve}
Given the set $ \Omega $ and the constant $r >0$ in \eqref{l:tubular},  let a constant $  \alpha$ be determined by:
\begin{gather} \label{eq:alpha}
    \alpha := \frac{k_1 k_3}{k_2 k_4 \sup_{(\omega,s) \in \Omega} \|\frac{\partial F(\omega,s,0)}{\partial \zeta}\|}
\end{gather}
with constants $k_1, k_2, k_3$, and $k_4$ from \eqref{eq:lya1} to \eqref{eq:lya3}. If then $\begin{bmatrix}
        \omega_{\mathcal{N}}(t) \\ s(t)
    \end{bmatrix}$  satisfies:
      \begin{align}\label{eq:omegainitial}
|(\omega(t_0),s(t_0))|_{\mathcal{A}} \leq \sqrt{\frac{k_1}{k_2}}  r
      \end{align}
    for $t_0$   and if  $\zeta(t)$ satisfies:
  \begin{align}\label{eq:zetabound}
||\zeta(t)|| \le \zeta_{max} \le \alpha r
      \end{align}
      for $t \ge t_0$, the relation:
         \begin{align}\label{eq:omegaevolution}
 |(\omega_{\mathcal{N}}(t),  s(t))|_{\mathcal{A}} \leq r
      \end{align}   
holds  for  \eqref{eq:transformdynamicbrevity}  with  $t \ge t_0$.  
\end{lemma}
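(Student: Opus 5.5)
The proof is a standard Lyapunov sublevel-set (ultimate boundedness) argument built on the function $V$ from \eqref{eq:lya1}--\eqref{eq:lya3}. Note first that $s$ and $\omega_{\mathcal{N}}$ are continuous across both the regular updates \eqref{eq:global_jumps_overall_network} and the small-talk updates \eqref{eq:local_jumps_overall_network}; only $\zeta$ jumps. Hence $V(\omega_{\mathcal{N}}(t),s(t))$ is continuous in $t$, and piecewise $C^1$ between update instants. Whatever $\zeta$ does enters $(\omega_{\mathcal{N}},s)$ only as a bounded input in \eqref{eq:transformdynamicbrevity}.

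\textbf{Step 1: derivative of $V$ along the perturbed dynamics.} As long as $(\omega_{\mathcal{N}}(t),s(t))\in\Omega$, I would write
\[
F(\omega_{\mathcal{N}},s,\zeta)=F(\omega_{\mathcal{N}},s,0)+\bigl(F(\omega_{\mathcal{N}},s,\zeta)-F(\omega_{\mathcal{N}},s,0)\bigr).
\]
From \eqref{eq:transformdynamic1}--\eqref{eq:transformdynamic2}, $F$ is affine in $\zeta$. Therefore the difference equals $\frac{\partial F(\omega_{\mathcal{N}},s,0)}{\partial\zeta}\zeta$ exactly, and its norm is at most $M\|\zeta\|$ with $M:=\sup_{\Omega}\|\partial F/\partial\zeta\|$. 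Combining \eqref{eq:lya2} and \eqref{eq:lya3} with Cauchy--Schwarz gives, with $d:=|(\omega_{\mathcal{N}},s)|_{\mathcal{A}}$,
\[
\dot V\le -k_3 d^2+k_4 M d\,\|\zeta\|\le -k_3 d^2+k_4 M d\,\alpha r.
\]
Substituting \eqref{eq:alpha} gives $k_4M\alpha r = \frac{k_1k_3}{k_2}r$. This yields $\dot V\le -k_3 d\bigl(d-\frac{k_1}{k_2}r\bigr)$, so $\dot V<0$ whenever $d>\frac{k_1}{k_2}r$.

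\textbf{Step 2: invariance of a sublevel set.} Set $c:=k_1r^2$ and $\Omega_c:=\{V\le c\}\cap\Omega$. By \eqref{eq:lya1}, $V\le c$ implies $d\le r$, so $\Omega_c\subseteq\Omega$. The initial condition \eqref{eq:omegainitial} gives $V(t_0)\le k_2\cdot\frac{k_1}{k_2}r^2=c$.

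Suppose the trajectory left $\Omega_c$. By continuity of $V$ there would be a first time $t^*$ with $V=c$ at which $V$ is about to exceed $c$. At that time $d\ge\sqrt{c/k_2}=\sqrt{k_1/k_2}\,r$. Since $k_1\le k_2$, we have $\sqrt{k_1/k_2}\ge k_1/k_2$, hence $d\ge \frac{k_1}{k_2}r$. If $k_1<k_2$ this inequality is strict and Step 1 gives $\dot V(t^*)<0$, a contradiction. Thus $V\le c$, and hence $d\le r$, for all $t\ge t_0$.

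Update instants need no special treatment: $V$ does not jump there and \eqref{eq:zetabound} holds for all $t$. The trajectory also cannot exit $\Omega$ before leaving $\Omega_c$, because $\Omega_c$ lies in the interior of $\Omega$ relative to the level $d=r$.

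\textbf{Main obstacle: the boundary case $k_1=k_2$.} Here the margin vanishes and only $\dot V\le0$ holds on $V=c$. I would handle it with a standard comparison argument: on $\{V\ge c\}$ we have $\dot V\le 0$, so $V$ cannot increase past $c$. Alternatively, one can run the Step 2 argument for every $c'\in(c,k_1 r_\epsilon^2)$ and let $c'\downarrow c$.

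A second, minor point is justifying that $\sup_\Omega\|\partial F/\partial\zeta\|$ is finite. Since $\partial F/\partial\zeta$ depends only on $f(\omega_{\mathcal{N}})$, this requires $f_i$ to be continuous and $\Omega$ to be compact. Both hold: $\Omega$ is a closed tube around the compact cycle $\mathcal{A}$.
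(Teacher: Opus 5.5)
Your proof is correct, and it reaches the conclusion by a different route than the paper.

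Both arguments start from the same estimate $\dot V\le -k_3 d^2+k_4 M d\,\|\zeta\|$.

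\emph{The paper's route.} It uses \eqref{eq:lya1} to rewrite this as $\dot V\le -\frac{k_3}{k_2}V+\frac{k_4 M\zeta_{max}}{\sqrt{k_1}}\sqrt V$. It then applies a Gr\"onwall/comparison step, in effect to $\sqrt V$. This gives the explicit trajectory bound $d(t)\le e^{-\frac{k_3}{2k_2}(t-t_0)}\sqrt{k_2/k_1}\,d(t_0)+(1-e^{-\frac{k_3}{2k_2}(t-t_0)})\,\zeta_{max}/\alpha$. That is a convex combination of two quantities, each at most $r$ under \eqref{eq:omegainitial} and \eqref{eq:zetabound}. It buys quantitative information: an exponential transient and the ultimate bound $\zeta_{max}/\alpha$, which the remark after the lemma relies on. It also needs no strict margin on a level set, because the comparison equation is Lipschitz.

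\emph{Your route.} You prove invariance of $\{V\le k_1 r^2\}$ by a first-exit contradiction. This is more elementary but yields only invariance. It is, however, more careful than the paper on several points the paper glosses over:
\begin{itemize}
\item $F$ is affine in $\zeta$, so the paper's ``Mean-Value Theorem'' step is an exact identity.
\item $V$ is continuous across update instants, since only $\zeta$ jumps.
\item \eqref{eq:lya1}--\eqref{eq:lya3} are valid only while the trajectory lies in $\Omega$; the paper applies Gr\"onwall without checking this.
\end{itemize}

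\emph{Your case split is unnecessary.} The ``comparison'' fix you propose for $k_1=k_2$ works uniformly for all $k_1\le k_2$. On $\{V\ge c\}\cap\Omega$ one has $d\ge\sqrt{k_1/k_2}\,r\ge\frac{k_1}{k_2}r$, hence $\dot V\le 0$. Now suppose $V$ exceeded $c$ at some time, and take the last earlier instant at which $V=c$. Monotonicity on the intervening interval then gives a contradiction, with no strictness required. You should drop the $r_\epsilon$ alternative: it needs Lyapunov estimates on a larger tube than the lemma supplies.

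\emph{The interior claim is slightly overstated.} $\Omega_c$ need not lie in the interior of $\Omega$, since $V=k_1r^2$ can be attained at $d=r$. Ruling out exit through $\partial\Omega$ therefore tacitly uses \eqref{eq:lya1}--\eqref{eq:lya3} on a slightly larger open set, as a converse Lyapunov theorem normally provides. The paper's proof shares this technicality, so it is not a weakness of your approach relative to it.
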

\begin{proof}
     The time derivative of the Lyapunov function $V$ in \eqref{eq:lya1} along   \eqref{eq:transformdynamicbrevity} satisfies: 
    \begin{align*}
         &   \dot{V} \hspace{-1mm} = \hspace{-1mm} \dot{V}(\omega_{\mathcal{N}},\hspace{-0.3mm}s)|_{\text{along   \eqref{eq:transformdynamicbrevity}}} \hspace{-1mm} + \hspace{-1mm}\nabla \hspace{-0.3mm}V\hspace{-0.3mm}(\omega_{\mathcal{N}},\hspace{-0.3mm}s)^T\hspace{-0.7mm} (F(\omega_{\mathcal{N}},\hspace{-0.3mm}s,\hspace{-0.3mm}\zeta)\hspace{-1mm}- \hspace{-1mm} F(\omega,\hspace{-0.3mm}s,\hspace{-0.3mm}0)) \\
            &\hspace{-1mm}  \leq \hspace{-1mm}  -k_3|(\omega_{\mathcal{N}},s)|_{\mathcal{A}}^2 \hspace{-1mm} +\hspace{-1mm}  k_4|(\omega_{\mathcal{N}},s)|_\mathcal{A}\| F(\omega_{\mathcal{N}},s,\zeta)\hspace{-1mm}  -\hspace{-1mm}  F(\omega_{\mathcal{N}},s,0)\| \\
           & \leq \hspace{-1mm}  -k_3|(\omega_{\mathcal{N}},s)|_{\mathcal{A}}^2 \hspace{-1mm} + \hspace{-1mm}  k_4|(\omega_{\mathcal{N}},s)|_\mathcal{A} \hspace{-3mm} \sup\limits_{(\omega_{\mathcal{N}},s) \in \Omega} \hspace{-3mm}  \| \frac{\partial F(\omega_{\mathcal{N}},s,0)}{\partial \zeta} \| \zeta_{max} \\
          & \leq - \frac{k_3}{k_2} V + \frac{k_4}{\sqrt{k_1}} \sqrt{V} \hspace{-1mm}\sup_{(\omega_{\mathcal{N}},s) \in \Omega} \hspace{-1mm}\| \frac{\partial F(\omega_{\mathcal{N}},s,0)}{\partial \zeta} \| \zeta_{max}, 
    \end{align*}
   where  the first inequality is due  to \eqref{eq:lya2}  and \eqref{eq:lya3},  the second inequality is obtained by using  the Mean-Value Theorem, and the last inequality is due to  \eqref{eq:lya1}. The Grönwall Lemma implies:
    \begin{align*}
& |(\omega_{\mathcal{N}}(t),  s(t))|_{\mathcal{A}} \le e^{-\frac{k_3}{2k_2}(t-t_0)} \sqrt{\frac{k_2}{k_1}} |(\omega(t_0),s(t_0))|_{\mathcal{A}}\\
&+ \frac{k_2 k_4}{k_1 k_3} \hspace{-1mm}\sup_{(\omega_{\mathcal{N}},s) \in \Omega} \hspace{-1mm} \|\frac{\partial F(\omega,s,0_{N-1})}{\partial \zeta}\|  (1 - e^{-\frac{k_3}{2k_2}(t-t_0)})\zeta_{max}.
   \end{align*}
 Thus, if \eqref{eq:omegainitial} and \eqref{eq:zetabound} are satisfied, the inequality above further yields:
       \begin{align*} 
 |(\omega_{\mathcal{N}}(t),  s(t))|_{\mathcal{A}} \le r 
   \end{align*}
  for all $t \ge t_0$ and thus completes the proof. 
\end{proof}
Lemma~\ref{l:preserve} states that if $\|\zeta(t)  \|$ does not exceed the threshold $\zeta_{max}$, then the set $\Omega$ remains invariant under the dynamics \eqref{eq:transformdynamicbrevity}. In other words, each oscillator $i \in \mathcal{N}$ continues to oscillate within a bounded neighborhood of the limit cycle $\mathcal{A}_0$ of the averaged dynamics \eqref{eq:avgdynamics}. Furthermore, the proof reveals that a smaller $\zeta_{max}$ leads to a smaller ultimate bound $\underset{t \to \infty}{\text{lim sup}} |    (\omega_{\mathcal{N}}(t),  s(t))|_{\mathcal{A}}$, as established by Theorem 9.1 in \cite{khalil2002nonlinear}. However, the dynamics of $\zeta(t)$ in \eqref{eq:transformdynamic3} does not guarantee that $\|\zeta(t)  \|$  always remain below  $\zeta_{max}$. Therefore, the next section investigates how the regular and small-talk updates help to ensure that  $\|\zeta(t)  \| \le \zeta_{max}$ for all $t \ge t_0$.

%%%%%%%%%%%%%%%%%%%%%%%%%%%%%%%%%%%%%%%%%%%%%%%%%%%%%%%%%%%%
\subsection{Evolution  of $\|\zeta(t) \|$ for Regular and Small-Talk Updates}

For the matrix $\Lambda_l$ in \eqref{eq:lambdaL}, let $V= [V_s, V_u] \in \mathbb{R}^{(N-1) \times (N-1)}$ be an orthogonal  eigenvector matrix with  $V_s$ containing the eigenvectors corresponding to the $N_1-1$ eigenvalues lying within $(0,1)$.  $V_u$ contains the eigenvectors  corresponding to the $N_2$ eigenvalues equal to one. It holds true that:
           \begin{align} \label{eq:sumpropertyinitial}
[V_s, V_u]^T \Lambda_l [V_s, V_u] = \text{diag}(J_s, I_{N_2}), ~J_s \in \mathbb{R}^{(N_1-1) \times (N_1-1)}
   \end{align}
and: 
\[
[V_s, V_u]^T =[V_s, V_u]^{-1}
\]
since  $\Lambda_l$ is symmetric.  By applying a   similarity transformation based on $V$ to  $\zeta(t)$:
\begin{gather} \label{eq:trafo_mu_zeta}
        \rho(t) = 
        \begin{bmatrix}
            \rho_s(t) \\ \rho_u(t)    
        \end{bmatrix} = [V_s, V_u]^T\zeta(t),
    \end{gather}
the relations:
           \begin{align} \label{eq:sumpropertyaddition}
V_s\rho_s(t) + V_u\rho_u(t) = \zeta(t)
   \end{align}
and: 
           \begin{align} \label{eq:sumproperty}
    \|\rho_s(t)\|^2 + \|\rho_u(t)\|^2 = \|\zeta(t)\|^2
   \end{align}
   always hold.
This similarity transformation allows one to treat the part $\rho_s(t)$ affected by the small-talk update in  \eqref{eq:local_jumps_overall_network} and the part  $\rho_u(t)$  (which remains unaffected) separately.
Define $z_s(t):= \|\rho_s (t)\|$ and $z_u(t):= \|\rho_u (t)\|$ with:
           \begin{align} \label{eq:sumpropertynew}
    z^2_s(t) + z^2_u(t) = \|\zeta(t)\|^2,~~~\|  \begin{bmatrix} z_s(t) \\ z_u(t) \end{bmatrix} \|= \|\zeta(t)\|
   \end{align}
according to \eqref{eq:sumproperty}.  The vector $[z_s(t), z_u(t)]^T$ satisfies:
 \begin{gather} \label{eq:local_jumps_Z}
        \begin{bmatrix}
            z_s(t^+_{l,k}) \\
            z_u(t^+_{l,k})
        \end{bmatrix} =     \begin{bmatrix}
            \|J_s \rho_s(t^-_{l,k})\| \\
            z_u(t^-_{l,k})
        \end{bmatrix}  \leq 
        \underbrace{\begin{bmatrix}
            \|J_s\| & 0 \\ 0 & 1  
        \end{bmatrix}}_{:= \Gamma_l}    \begin{bmatrix}
            z_s(t^-_{l,k})\\
            z_u(t^-_{l,k})
        \end{bmatrix}
    \end{gather} 
in each  $t_{l} \in \mathcal{T}_l$ of small-talk updates, where $  \|\Gamma_l\| = 1$ and  $  \|J_s\| < 1$ due to \eqref{eq:sumpropertyinitial}. At  each time  $t_{g} \in \mathcal{T}_g$ of regular update,  the inequality:
    \begin{gather} \label{eq:global_jumps_Z}
              \begin{bmatrix}
            z_s(t^+_{g,k})\\
            z_u(t^+_{g,k})
        \end{bmatrix}\leq \underbrace{\begin{bmatrix}\|V_s^T \Lambda_g V_s\|     & \|V_s^T \Lambda_g V_u\| \\ \|V_u^T \Lambda_g V_s\| & \|V_u^T \Lambda_g V_u\| \end{bmatrix}}_{:= \Gamma_g}      \begin{bmatrix}
            z_s(t^-_{g,k}) \\
            z_u(t^-_{g,k})
        \end{bmatrix}
    \end{gather}
    can be obtained in a similar way, together with the following property for the matrix $\Gamma_g$:
\begin{lemma} \label{l:GammaG}
The matrix $\Gamma_g$  in \eqref{eq:global_jumps_Z} is Schur stable and satisfies $\|\Gamma_g\| < 1$.
\end{lemma}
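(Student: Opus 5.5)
The plan is to reduce everything to spectral facts about $\Lambda_g$ and then compare the $2\times 2$ matrix of block norms $\Gamma_g$ with the full matrix $M := [V_s,V_u]^T\Lambda_g[V_s,V_u]$. Schur stability of $\Gamma_g$ is then immediate from $\|\Gamma_g\|<1$, because $\rho(\Gamma_g)\le\|\Gamma_g\|$. So the real content is the norm bound.

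\emph{Step 1 (spectrum of $\Lambda_g$).} By \eqref{eq:requirementR}, the columns of $R$ form an orthonormal basis of $1_N^\perp$, and $L1_N=0$. Hence $\Lambda_g=R^TWR=I_{N-1}-\varphi R^TLR$ is symmetric with eigenvalues $1-\varphi\lambda_i(L)$, $i=2,\dots,N$. For $0<\varphi<1/\lambda_N(L)$ these all lie in $(0,1)$. Since $[V_s,V_u]$ is orthogonal, $M$ has the same spectrum. Therefore $0\prec M\preceq \bar\lambda I$ with $\bar\lambda:=1-\varphi\lambda_2(L)<1$, and in particular $\|M\|<1$. Note also that $\Gamma_g$ is symmetric and entrywise nonnegative, because $\|V_u^T\Lambda_gV_s\|=\|(V_s^T\Lambda_gV_u)^T\|$. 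Write $\Gamma_g=\begin{bmatrix} a & b\\ b & d\end{bmatrix}$.

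\emph{Step 2 (reduction to a quadratic form).} Since $\Gamma_g$ is symmetric and nonnegative, $\|\Gamma_g\|=\max\{a x^2+2bxy+d y^2: x,y\ge 0,\ x^2+y^2=1\}$. So it suffices to bound this form by $\bar\lambda$ or by some other constant below one. I would try to build, for each admissible $(x,y)$, a unit test vector $v=(x\,u_s,\;y\,u_u)$ with $u_s,u_u$ unit vectors, so that $v^TMv\le\bar\lambda$ reproduces the form. This works exactly when a single pair $(u_s,u_u)$ attains all three block norms simultaneously: $u_s^TM_{ss}u_s=a$, $u_u^TM_{uu}u_u=d$ and $u_s^TM_{su}u_u=b$.

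\emph{Step 3 (main obstacle).} In general these three block norms are attained at different vectors. For an arbitrary positive definite $M$ with $\|M\|<1$, the matrix of block norms can in fact have norm exceeding $\|M\|$. So the argument must use the specific structure here, namely that $[V_s,V_u]$ diagonalizes $\Lambda_l$ from Lemma~\ref{l:LambdaL}.

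What I would try first is a Schur-complement bound on the off-diagonal block. Positive semidefiniteness of $M$ gives $M_{su}=M_{ss}^{1/2}KM_{uu}^{1/2}$ with $\|K\|\le 1$, and positive semidefiniteness of $\bar\lambda I-M$ gives $M_{su}=(\bar\lambda I-M_{ss})^{1/2}K'(\bar\lambda I-M_{uu})^{1/2}$ with $\|K'\|\le 1$. I would combine these with structural information linking $\Lambda_g$ and $\Lambda_l$. Both are of the form $I-\varphi R^T(\cdot)R$, with $L$ and $\operatorname{diag}(PL_{11},0)$ respectively, and $V_u$ spans the kernel of $R^T\operatorname{diag}(PL_{11},0)R$. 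From this I would aim to show that the directions attaining $a$, $d$ and $b$ can be aligned, or at least that $b^2<(\bar\lambda-a)(\bar\lambda-d)$ holds up to the needed slack. That inequality gives $\|\Gamma_g\|<1$ directly via the $2\times2$ characteristic polynomial.

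If this alignment fails in general, I expect the lemma to need an additional hypothesis on $L_{12}$ (or on the choice of eigenbasis within the unit eigenspace of $\Lambda_l$). I would then check the claim on the block structure of $L$ from \eqref{eq:laplaciandecompse} to identify that hypothesis. I expect this step to be the crux of the proof.
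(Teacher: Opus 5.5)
Your Steps 1 and 2 are correct. Noting that $\Gamma_g$ is symmetric, so that Schur stability and $\|\Gamma_g\|<1$ coincide, is a point the paper never makes. But Step 3 is only a plan, so the proposal does not prove the lemma. It also cannot be completed, because the lemma is false under the paper's hypotheses. The paper's proof crosses exactly the gap you identified. It asserts that $I_{N-1}-[V_s,V_u]^\top\Lambda_g[V_s,V_u]\succ0$ implies $(1-\|V_u^\top\Lambda_gV_u\|)(1-\|V_s^\top\Lambda_gV_s\|)>\|V_s^\top\Lambda_gV_u\|^2$. That is the block-norm implication you rightly doubt, and the fact that $[V_s,V_u]$ diagonalizes $\Lambda_l$ does not rescue it.

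Concretely, let $\mathcal G$ be the cycle on nodes $1,\dots,12$ in cyclic order, and let $\mathcal N_1=\{1,2\}$; this is the ring of Sec.~\ref{example}, lengthened. Then $L_{11}=\begin{bmatrix}2&-1\\-1&2\end{bmatrix}$ meets the hypothesis of Lemma~\ref{l:localup}, and $\lambda_N(L)=4$. For every $\varphi\in(0,\tfrac14)$ one has $\mathrm{range}(RV_s)=\mathrm{span}\{e_1-e_2\}$ and $\mathrm{range}(RV_u)=\{v\in1_N^\perp:\ v_1=v_2\}$, with $e_i$ the standard basis vectors of $\mathbb{R}^{12}$. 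Writing $\Gamma_g=\begin{bmatrix}a&b\\b&d\end{bmatrix}$ as you do:
\begin{itemize}
\item $a=1-\tfrac{\varphi}{2}(e_1-e_2)^\top L(e_1-e_2)=1-3\varphi$;
\item $b\ge\tfrac{\varphi}{2}\,|(e_1-e_2)^\top L(e_3-e_{12})|=\varphi$;
\item the Fiedler vector $v_i=\cos(\pi(2i-3)/12)$ has $v_1=v_2$, so $d\ge1-\varphi(2-\sqrt3)$.
\end{itemize}
Hence
\[
\det(I_2-\Gamma_g)=(1-a)(1-d)-b^2\le\varphi^2\,(5-3\sqrt3)<0,
\]
so $\Gamma_g$ has an eigenvalue larger than one and is neither Schur stable nor contractive.

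So the ``additional hypothesis'' branch of your plan is the right one, but the hypothesis cannot concern the choice of eigenbasis. The three block norms are invariant under $V_s\mapsto V_sQ_s$ and $V_u\mapsto V_uQ_u$ with orthogonal $Q_s,Q_u$, so $\Gamma_g$ depends only on the two eigenspaces of $\Lambda_l$.

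Define $L_s:=V_s^\top R^\top LRV_s$, $L_u:=V_u^\top R^\top LRV_u$ and $L_{su}:=V_s^\top R^\top LRV_u$. Then $a=1-\varphi\lambda_{\min}(L_s)$, $d=1-\varphi\lambda_{\min}(L_u)$ and $b=\varphi\|L_{su}\|$. Consequently $\|\Gamma_g\|<1$ holds if and only if $\lambda_{\min}(L_s)\,\lambda_{\min}(L_u)>\|L_{su}\|^2$. This condition is independent of $\varphi$, and it depends on $L_{22}$ through $L_u$, not only on $L_{12}$. For an $N$-cycle with $\mathcal N_1$ two adjacent nodes it reads $3(2-2\cos(2\pi/N))>1$. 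That holds for the paper's $4$-node example but fails for $N\ge 11$.

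Alternatively, look at where the lemma is used in the Theorem. There only a scalar contraction factor for $\|\zeta\|$ at regular updates is needed. The bound $\|\zeta(t^+_{g,k})\|\le\|\Lambda_g\|\,\|\zeta(t^-_{g,k})\|$ with $\|\Lambda_g\|=1-\varphi\lambda_2(L)<1$ already provides it, with no block-norm matrix at all.
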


The  proof to this lemma is stated in Appendix.  Based on the equations in \eqref{eq:sumpropertynew}, the  evolution of $\|\zeta(t)\|$ over time can be equivalently  analyzed by examining  the evolution of $\| \begin{bmatrix} z_s(t) \\ z_u(t) \end{bmatrix} \|$. In the first step, the time derivative of  $z_s(t)$ is given by:
    \begin{align}\label{eq:flowrhos}
& \dot{z}_s (t) =  \frac{\rho^T_s (t) \dot{\rho}_s (t)}{z_s(t)}=  \frac{\rho^T_s (t) V_s^{T} \dot{\zeta}(t)}{z_s(t)} \notag\\
& =   \frac{\rho^T_s (t) V_s^{T} }{z_s(t)} \bigg( R^T \phi(\omega_{\mathcal{N}}(t),1_Ns(t))  \notag\\
&\hspace{-1mm} + \hspace{-1mm}  R^T(I_N\hspace{-1mm}  - \hspace{-1mm} \text{diag}(f(\omega_{\mathcal{N}}(t) )))R ( V_s\rho_s(t)\hspace{-1mm}  +\hspace{-1mm}  V_u\rho_u(t)) \bigg)
   \end{align}
where the last equality is obtained from  \eqref{eq:trafo_mu_zeta} and the dynamics of $\zeta(t)$. The three terms on the right-hand side of \eqref{eq:flowrhos} are upper-bounded by:
    \begin{align}\label{eq:Z_s_1st_fraction_bound}
&\frac{\rho^T_s (t) V_s^T R^T  \phi(\omega_{\mathcal{N}}(t),1_Ns(t))}{z_s(t)} \notag \\
&\leq \underbrace{\sup_{(\omega_{\mathcal{N}}(t),s(t))\in \Omega} \|V_s^T R^T  \phi(\omega_{\mathcal{N}}(t),1_Ns(t))\|}_{:= b_s}
   \end{align}
   and:
\begin{gather} \label{eq:Z_s_2nd_fraction_bound}
    \begin{split}
  & \frac{\rho^T_s (t) V_s^T R^T (I_N -\text{diag}(f(\omega_{\mathcal{N}}(t) )))R  V_s\rho_s(t)}{z_s(t)} \leq a_{ss} z_s(t)
    \end{split}
\end{gather}
   with  $a_{ss}$  denoting the largest magnitude of all eigenvalues of $V_s^T R^T (I_N -diag(f(\omega_{\mathcal{N}}(t) )))R  V_s$ for all $\begin{bmatrix}
        \omega_{\mathcal{N}}(t) \\ s(t)
    \end{bmatrix} \in     \Omega$, as well as: 
\begin{gather} \label{eq:Z_s_3rd_fraction_bound}
    \begin{split}
    &  \frac{\rho^T_s (t) V_s^T R^T (I_N -diag(f(\omega_{\mathcal{N}}(t) )))R  V_u\rho_u(t)}{z_s(t)}   \\
    & \leq \| V_s^T R^T (I_N -\text{diag}(f(\omega_{\mathcal{N}}(t) )))R  V_u\rho_u(t)\| \\ 
    &\leq \underbrace{ \hspace{-2mm}\sup_{(\omega_{\mathcal{N}}(t),s(t))\in \Omega}\hspace{-2mm}\|V_s^T R^T (I_N -\text{diag}(f(\omega_{\mathcal{N}}(t) )))R  V_u\|}_{:=a_{su}} \rho_u(t).
    \end{split}
\end{gather}
By applying a similar bounding procedure to the dynamics of $z_u(t)$, the following inequalities can by set up by use of non-negative constants $b_u$, $a_{uu}$ and  $a_{us}$:
     \begin{gather} \label{eq:Zdyn}
      \begin{bmatrix}  \dot{z}_s (t) \\  \dot{z}_u (t) \end{bmatrix} \leq \underbrace{\begin{bmatrix} a_{ss} & a_{su} \\ a_{us} & a_{uu}
        \end{bmatrix}}_{:=A}   \begin{bmatrix}  z_s (t) \\  z_u (t) \end{bmatrix} + \underbrace{\begin{bmatrix} b_s \\ b_u \end{bmatrix}}_{:=B} 
    \end{gather}
Here, only non-negative elements are contained in $A$ and $B$. It is assumed that  $A^{-1}$ exists, as one can always increase the value of each element in $A$ without affecting the  inequality.
Now the main result is provided: 
\begin{theorem}
Given  $\zeta_{max}$   in \eqref{eq:zetabound} and the regular update interval $\Delta T$,    assume that
$|(\omega(t^+_0),s(t^+_0))|_{\mathcal{A}} \leq \sqrt{\frac{k_1}{k_2}}  r$ and  $\|\zeta(t^+_0)\| \leq \|\Gamma_g\| \zeta_{max}$ hold at the initial time $t^+_0$. If the interval $\tau$ of the small-talk update and the corresponding constant $\mu = \frac{\Delta T}{\tau}$ in \eqref{eq:smalltalktime} satisfy:
    \begin{align}\label{eq:finaltheorem1}
        \| \tilde{A}_{m} \| \zeta_{max} + \|\tilde{B}_{m}\| \leq \zeta_{max} 
    \end{align}
for all  $m \in \{1, ..., \mu\}$ with:
    \begin{align}
          & \tilde{A}_m = e^{A\tau} (\Gamma_l e^{A\tau})^{m-1},\label{eq:finaltheorem2}\\
        & \tilde{B}_m = (\sum_{j=1}^{m-1} (e^{A\tau}\Gamma_l)^{j-1})A^{-1}(e^{A\tau}-I_2)B,    \label{eq:finaltheorem3}
    \end{align}
then it holds for \eqref{eq:transformdynamicbrevity}  and $t >t_0$ that:
         \begin{align}\label{eq:finalproerpty}
 |(\omega_{\mathcal{N}}(t),  s(t))|_{\mathcal{A}} \le r. 
      \end{align}   
\end{theorem}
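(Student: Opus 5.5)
The strategy is to reduce the claim to Lemma~\ref{l:preserve}. The initial condition on $(\omega_{\mathcal{N}},s)$ already matches \eqref{eq:omegainitial}, so it suffices to show $\|\zeta(t)\|\le\zeta_{max}$ for all $t>t_0$. By \eqref{eq:sumpropertynew} this is equivalent to $\|z(t)\|\le\zeta_{max}$ with $z:=[z_s,z_u]^T$. I will prove this by induction over the regular-update intervals $(t_{g,k},t_{g,k+1}]$. The induction hypothesis is $\|z(t_{g,k}^+)\|\le\|\Gamma_g\|\zeta_{max}\le\zeta_{max}$, which holds at $k=0$ by assumption and Lemma~\ref{l:GammaG}. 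During the argument, $(\omega_{\mathcal{N}},s)$ must stay in $\Omega$ so that the constants in \eqref{eq:Zdyn} remain valid. I handle this with a standard continuation argument: let $t^*$ be the first time at which either $\|\zeta\|>\zeta_{max}$ or the state leaves $\Omega$. On $[t_0,t^*)$ both Lemma~\ref{l:preserve} and \eqref{eq:Zdyn} apply, and the estimates below then contradict the existence of $t^*$.

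The flow step is a comparison argument. Since $A$ and $B$ are entrywise non-negative, $A$ is Metzler. Hence $e^{A t}\ge 0$ entrywise, and the differential inequality \eqref{eq:Zdyn} gives the componentwise bound $z(t)\le e^{A(t-t')}z(t')+A^{-1}(e^{A(t-t')}-I_2)B$ on any jump-free interval. The jump maps \eqref{eq:local_jumps_Z} (with $\Gamma_l\ge0$) are likewise order preserving. Composing the flow over one interval $\tau$ with a small-talk jump, and iterating, I obtain for $t\in(t_{g,k}+(m-1)\tau,\,t_{g,k}+m\tau]$:
\[
z(t)\le e^{A\theta}(\Gamma_l e^{A\tau})^{m-1}z(t_{g,k}^+)+\Big(\textstyle\sum_{j}\cdots\Big)A^{-1}(e^{A\tau}-I_2)B+A^{-1}(e^{A\theta}-I_2)B,
\]
where $\theta=t-t_{g,k}-(m-1)\tau\in(0,\tau]$. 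At $\theta=\tau$ this is exactly $\tilde A_m z(t_{g,k}^+)+\tilde B_m$, after reindexing the geometric sum so that it matches \eqref{eq:finaltheorem3}. Since all terms are non-negative, the norm is monotone along the ordered vectors. Using the induction hypothesis and \eqref{eq:finaltheorem1} then gives $\|z\|\le\|\tilde A_m\|\zeta_{max}+\|\tilde B_m\|\le\zeta_{max}$ at the sampling instants $m=1,\dots,\mu$.

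The main obstacle is the intersample times $\theta<\tau$. Condition \eqref{eq:finaltheorem1} is only stated at multiples of $\tau$. My first attempt is to use that $e^{A\theta}$ and $A^{-1}(e^{A\theta}-I_2)B=\int_0^\theta e^{As}B\,ds$ are entrywise nondecreasing in $\theta$, which holds because $A$ is non-negative rather than merely Metzler, so $\tfrac{d}{d\theta}e^{A\theta}=Ae^{A\theta}\ge0$. The intersample bound is then dominated componentwise by the bound at $\theta=\tau$, i.e. by $\tilde A_m z(t_{g,k}^+)+\tilde B_m$. The same monotonicity also controls the small-talk jump itself, since $\Gamma_l\le I$ entrywise only decreases $z$.

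Finally, at $t_{g,k+1}=t_{g,k}+\mu\tau$ the regular jump \eqref{eq:global_jumps_Z} gives $\|z(t_{g,k+1}^+)\|\le\|\Gamma_g\|\,\|z(t_{g,k+1}^-)\|\le\|\Gamma_g\|\zeta_{max}$, using the case $m=\mu$. This closes the induction. Hence $\|\zeta(t)\|\le\zeta_{max}\le\alpha r$ for all $t\ge t_0$, and Lemma~\ref{l:preserve} yields \eqref{eq:finalproerpty}.
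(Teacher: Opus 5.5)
Your route matches the paper's proof: a comparison (Gr\"onwall) bound driven by the non-negative pair $(A,B)$ on each flow segment, the jump $\Gamma_l$ at small-talk instants, $\Gamma_g$ at the regular update to restore $\|z(t^+_{g,k})\|\le\|\Gamma_g\|\zeta_{max}$, induction over regular periods, and Lemma~\ref{l:preserve} at the end. You are more careful than the paper in two places. First, your continuation argument keeps the state in $\Omega$, where the constants of \eqref{eq:Zdyn} are valid; the paper omits this. Making the contradiction strict needs a little slack, since all your estimates are non-strict. Second, you justify the intersample estimate on every segment via $\tfrac{d}{d\theta}e^{A\theta}=Ae^{A\theta}\ge 0$ and $A^{-1}(e^{A\theta}-I_2)B=\int_0^\theta e^{As}B\,ds$; the paper only asserts this for the first segment.

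One step is wrong as written: no reindexing makes your bound equal to $\tilde A_m z(t^+_{g,k})+\tilde B_m$. The forcing accumulated over $m$ segments is $\sum_{j=1}^{m}(e^{A\tau}\Gamma_l)^{j-1}A^{-1}(e^{A\tau}-I_2)B$, which has $m$ terms, whereas \eqref{eq:finaltheorem3} has only $m-1$. Already for $m=1$ the literal definition gives $\tilde B_1=0$, while your bound contains $A^{-1}(e^{A\tau}-I_2)B$. For $\mu=1$ the literal hypothesis does not involve $B$ at all, so it cannot control the drift of $\zeta$. The paper's proof silently uses $\|\tilde B_1\|=\|A^{-1}(e^{A\tau}-I_2)B\|$, so the intended upper summation limit is $m$. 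You should state explicitly that you read \eqref{eq:finaltheorem3} that way, rather than claim a reindexing.

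A further remark concerns the statement rather than your argument. Since $A\ge 0$ is invertible, $\rho(A)>0$. By Perron--Frobenius, $\|\tilde A_1\|=\|e^{A\tau}\|\ge e^{\rho(A)\tau}>1$, so the $m=1$ instance of \eqref{eq:finaltheorem1} can hold only if $\zeta_{max}=0$. The factor you discard in $\|z(t^+_{g,k})\|\le\|\Gamma_g\|\zeta_{max}\le\zeta_{max}$ is what repairs this. If you keep it, your induction goes through verbatim under the non-vacuous condition $\|\tilde A_m\|\,\|\Gamma_g\|\,\zeta_{max}+\|\tilde B_m\|\le\zeta_{max}$. This is satisfiable for small $\Delta T$ because $\|\Gamma_g\|<1$ by Lemma~\ref{l:GammaG}, and it is exactly the inequality the paper's proof checks at $m=1$.
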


\begin{proof}
Consider the interval $t \in (t_0, t_0 + \tau)$, spanning from the initial time to the first small-talk update. \eqref{eq:Zdyn} implies:
   \begin{gather}
    \begin{split}
        \begin{split}
         \begin{bmatrix}  z_s (t) \\  z_u (t) \end{bmatrix} \hspace{-0.6mm} \leq \hspace{-0.6mm}e^{A(t-t^+_0)} \hspace{-0.6mm} \begin{bmatrix}  z_s (t^+_0) \\  z_u (t^+_0) \end{bmatrix} \hspace{-0.6mm}+ \hspace{-0.6mm}A^{-1}(e^{A(t-t^+_0)}-I_2)B \label{eq:Gronwall_inequaltiy}
        \end{split}
    \end{split}
    \end{gather}
    according to the   generalized Grönwall inequality (see Corollary 2 in \cite{chandra1976linear}).   Since all elements in  $A$ and $B$ are non-negative, it further applies that:
    \begin{align}\label{eq:positivedevelop}
             \begin{bmatrix}  z_s (t) \\  z_u (t) \end{bmatrix} \leq   e^{A \tau}  \begin{bmatrix}  z_s (t^+_0) \\  z_u (t^+_0) \end{bmatrix} +A^{-1}(e^{A\tau}-I_2)B
    \end{align}
 holds for $t \in  (t_0, t_0 + \tau)$.  Due to   \eqref{eq:sumpropertynew},  \eqref{eq:positivedevelop} further implies:
     \begin{align*}
             \|\begin{bmatrix}  z_s (t) \\  z_u (t) \end{bmatrix}\| & \hspace{-0.6mm} =  \hspace{-0.6mm} \|\zeta(t)\|\hspace{-0.6mm}  \leq \hspace{-0.6mm}  \| e^{A \tau} \|   \|\zeta(t^+_0)\| \hspace{-0.6mm}  + \hspace{-0.6mm}  \| A^{-1}(e^{A\tau}-I_2)B  \| \\
             & \leq \| \tilde{A}_{1} \| \|\Gamma_g\| \zeta_{max} + \|\tilde{B}_{1}\|  \leq \zeta_{max}
    \end{align*}
    according to Lemma~\ref{l:GammaG} and \eqref{eq:finaltheorem1}, which indicates that $ \|\zeta(t)\|$ remains below the threshold $\zeta_{max}$ throughout the entire interval $(t_0, t_0 + \tau)$.  At the time $t_{l,1}:=t^+_0 + \tau$ (the first instance at which a small-talk update occurs) \eqref{eq:local_jumps_Z} implies that:
   \begin{gather}
    \begin{split}
        \begin{split}
      \begin{bmatrix}
            z_s(t^+_{l,1}) \\
            z_u(t^+_{l,1})
        \end{bmatrix}  \hspace{-1.3mm} \le   \hspace{-1mm} \Gamma_l   \hspace{-1mm} \begin{bmatrix}  z_s (t^-_{l,1}) \\  z_u (t^-_{l,1}) \end{bmatrix}   \hspace{-1.3mm}  \leq   \hspace{-1mm} \Gamma_l   e^{A \tau}  \hspace{-1mm} \begin{bmatrix}  z_s (t^+_0) \\  z_u (t^+_0) \end{bmatrix} \hspace{-1.3mm} +  \hspace{-1mm} \Gamma_l  A^{-1}  \hspace{-0.4mm} (e^{A\tau} \hspace{-1.3mm}- \hspace{-1mm}I_2)B \label{eq:Gronwall_inequaltiy2}
        \end{split}
    \end{split}
    \end{gather}
holds, and thus:
     \begin{align*}
 \|\zeta(t^+_{l,1})\| \leq \|\Gamma_l\| \|\zeta(t^-_{l,1})\|  \leq  \zeta_{max}.
    \end{align*}
By recursively applying \eqref{eq:Gronwall_inequaltiy} and \eqref{eq:Gronwall_inequaltiy2} for each successive interval $[t_{l,k}, t_{l,k+1})$ one obtains:
 \begin{align} \label{eq:Gronwall_inequaltiy3}
\begin{bmatrix}  z_s (t^-_{l,k+1}) \\  z_u (t^-_{l,k+1}) \end{bmatrix} \leq \tilde{A}_k  \begin{bmatrix}  z_s (t^+_0) \\  z_u (t^+_0) \end{bmatrix} + \tilde{B}_k
    \end{align}
for all $k=0,1, \ldots, \mu-1$, and thus:
     \begin{align*}
 \|\zeta(t^-_{l,k+1})\| \leq   \|\tilde{A}_k \|   \|\zeta(t^+_0)\|  + \|\tilde{b}_k \| \leq  \zeta_{max}
    \end{align*}
according to \eqref{eq:finaltheorem1}. In $t_{g,1}:=t^+_0 + \mu\tau$, the moment of the first regular update:
     \begin{align} \label{eq:Gronwall_inequaltiy5}
 \|\zeta(t^+_{g,1})\| &\leq  \|\Gamma_g\| \|\zeta(t^-_{g,1})\| =  \|\Gamma_g\| \|\zeta(t^-_{l,\mu})\| \leq   \|\Gamma_g\| \zeta_{max}
    \end{align}
        follows from \eqref{eq:global_jumps_Z}. As a result, the relation $ \|\zeta(t^+_{g,1})\| \le \|\Gamma_g\| \zeta_{max}$ holds  at time $t^+_{g,1}$, just as it does at the initial time $ t^+_0$. By recursively applying this analysis to each  successive  intervals $[t_{g,k}, t_{g,k+1})$, it follows that $ \|\zeta(t)\| \le  \zeta_{max}$  is satisfied for all $t > t_0$. Based on Lemma~\ref{l:preserve}, the relation in \eqref{eq:finalproerpty} must therefore hold   for all $t > t_0$.  
        \end{proof}

\section{NUMERIC EXAMPLE}\label{example}

This section demonstrates two key effects of including small-talk updates. First, when maintaining the same regular update interval as in the case without small-talk updates, small-talk updates yield a reduction in the norm of the maximal disagreement. Second, small-talk updates enable an extension of the regular update interval while preserving an equivalent norm of the maximal disagreement, thereby reducing the overall communication overhead. \\
To validate these points, a network of four Van-der-Pol oscillators is considered, governed by the \eqref{eq:dyn} with $f_i(\omega_i) = \kappa_i(\omega_i^2 - 1)$ and $g_i(\omega_i) = \omega_i$ as well as damping parameters  $\kappa_1 = 2.5$, $\kappa_2 = 2$, $\kappa_3 = 1.5$, and $\kappa_4 = 1$. The interaction graph is depicted in Fig.~\ref{fig:disagree3}. The Laplacian takes the form stated in \eqref{eq:laplaciandecompse} and is given by:
{
\setlength{\arraycolsep}{2pt}
\renewcommand{\arraystretch}{0.8}
\begin{gather*}
L = 
\begin{bmatrix}
2 & -1 & 0 & -1 \\
-1 & 2 & -1 & 0 \\
0 & -1 & 2 & -1 \\
-1 & 0 & -1 & 2
\end{bmatrix}, \quad L_{11} = 
\begin{bmatrix}
2 & -1 \\
-1 & 2 
\end{bmatrix}.
\end{gather*}
}
Note that the similarity transformation of the nodes $2$ and $4$ leaves $L$ unchanged due to the symmetric structure of the ring graph. Subgraph $G_1$ satisfies the requirements of Lemma~\ref{l:localup}. Since the damping parameters $\kappa_i$ of oscillators $1$ and $4$ exhibit the greatest mutual deviation, this subgraph selection is expected to yield the largest reduction in the synchronization error, compared to all other possible subgraphs of two oscillators. The matrix $R$ can be constructed via the Gram-Schmidt procedure to:
{
\setlength{\arraycolsep}{2pt}
\renewcommand{\arraystretch}{0.8}
\begin{gather*}
R =
\begin{bmatrix}
\frac{1}{\sqrt{2}} & \frac{1}{\sqrt{6}} & \frac{1}{\sqrt{12}} \\
-\frac{1}{\sqrt{2}} & \frac{1}{\sqrt{6}} & \frac{1}{\sqrt{12}} \\
0 & -\frac{2}{\sqrt{6}} & \frac{1}{\sqrt{12}} \\
0 & 0 & -\frac{3}{\sqrt{12}}
\end{bmatrix}.
\end{gather*}
}
Choosing $\varphi = 0.2$ fulfills $0<\varphi < \frac{1}{\lambda_N(L)}$ and results in the following matrices $\Lambda_g$ and $\Lambda_l$ for the regular and small-talk updates: 
{
\setlength{\arraycolsep}{2pt}
\renewcommand{\arraystretch}{0.8}
\begin{gather*}
\Lambda_g =
\begin{bmatrix}
0.4 & 0.1155 & -0.1633 \\
0.1155 & 0.5333 & 0.0943\\
-0.1633 & 0.0943 & 0.4667
\end{bmatrix}, \Lambda_l = \text{diag}(0.4,1,1).
\end{gather*}
}

Figure~\ref{fig:disagree} illustrates the two principal effects induced by small-talk updates:
Figure~\ref{fig:disagree1} demonstrates that introducing additional small-talk updates reduces the norm of the maximal disagreement, while the regular update interval is held constant. The corresponding phase portrait for the oscillator network under the small-talk settings of Fig.~\ref{fig:disagree1} is presented in Fig.~\ref{fig:disagree4}. The dependence of the peak disagreement norm on the small-talk update interval, for a fixed regular update time, is further investigated in Table~\ref{tab:convergence}. The results indicate that the peak of the disagreement norm converges to a positive lower bound as the small-talk update interval decreases, suggesting a fundamental limit on the achievable error reduction under the proposed update scheme.

The second effect is illustrated in Fig.~\ref{fig:disagree2}, where an improvement in communication efficiency is observed. Each regular update requires $8$ communication events, as any agent transmits its state to two neighbors, whereas each small-talk update requires only $2$ communication events. Over one regular update period in the small-talk setting ($\mu \cdot \tau = 0.3\,\mathrm{s}$ in this example), a total of $18$ communication events are performed. By contrast, achieving a comparable norm of the maximal disagreement using regular updates alone with $\Delta T = 0.1\,\mathrm{s}$ requires $24$ communication events within the same $0.3\,\mathrm{s}$ interval.

\begin{figure}[h!]
    \centering
    \begin{subfigure}{0.48\textwidth}
        \centering
        \psfrag{m1}[][b]{ $\kappa_1=2.5$}
        \psfrag{1}[][b]{ $1$}
        \psfrag{m2}[][b]{\small $\kappa_2=2$}
        \psfrag{2}[][b]{ $2$}
        \psfrag{m3}[][t]{$\kappa_3=1.5$}
        \psfrag{3}[][b]{ $3$}
        \psfrag{m4}[][t]{\small $\kappa_4=1$}
        \psfrag{4}[][b]{ $4$}
        \psfrag{G1}[][b]{\textcolor{red}{$G_1$}}
        \includegraphics[width=0.35\linewidth]{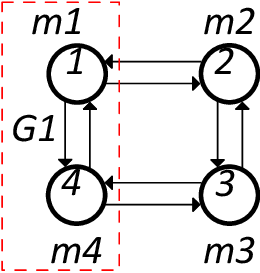}
        \caption{Used interaction network, where the subgraph containing the oscillators $1$ and $4$ use small-talk updates.}
        \label{fig:disagree3}
    \end{subfigure}
    \begin{subfigure}{0.48\textwidth}
        \centering
        \psfrag{t}[][b]{\scriptsize $t \hspace{1mm} \mathrm{[s]}$}
        \psfrag{z}[][t]{\scriptsize $||\zeta(t)||$}
        \includegraphics[width=0.65\linewidth]{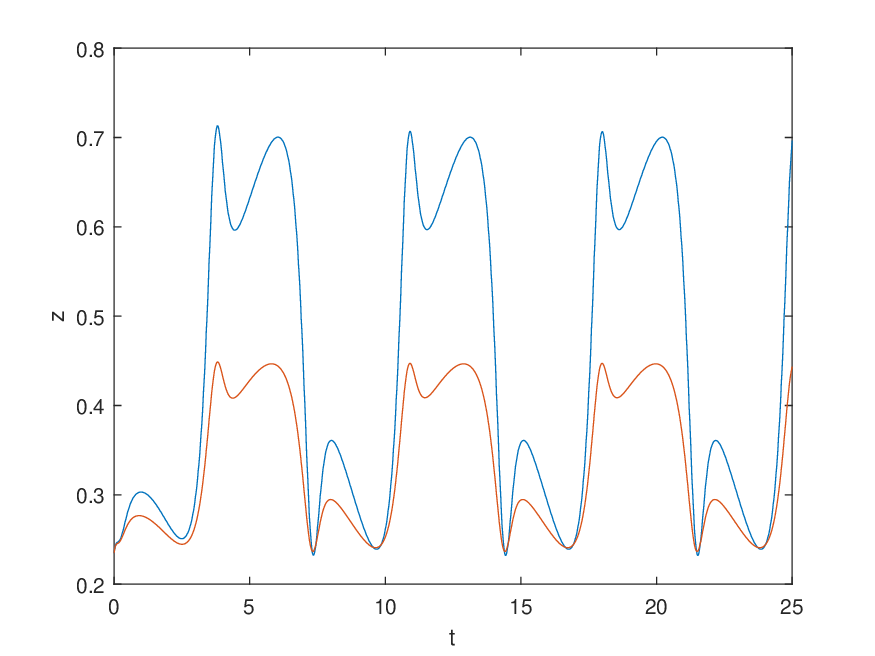}
        \caption{Comparison of the norm of the disagreement for two cases: Blue: only regular updates every $0.1 \mathrm{s}$. Orange: small-talk updates every $0.033\mathrm{s}$, $\mu=3$.}
        \label{fig:disagree1}
    \end{subfigure}
      \begin{subfigure}{0.48\textwidth}
        \centering
        \psfrag{x}[][b]{\scriptsize $\omega_i(t)$}
        \psfrag{v}[][t]{\scriptsize $\xi_i(t)$}
        \includegraphics[width=0.65\linewidth]{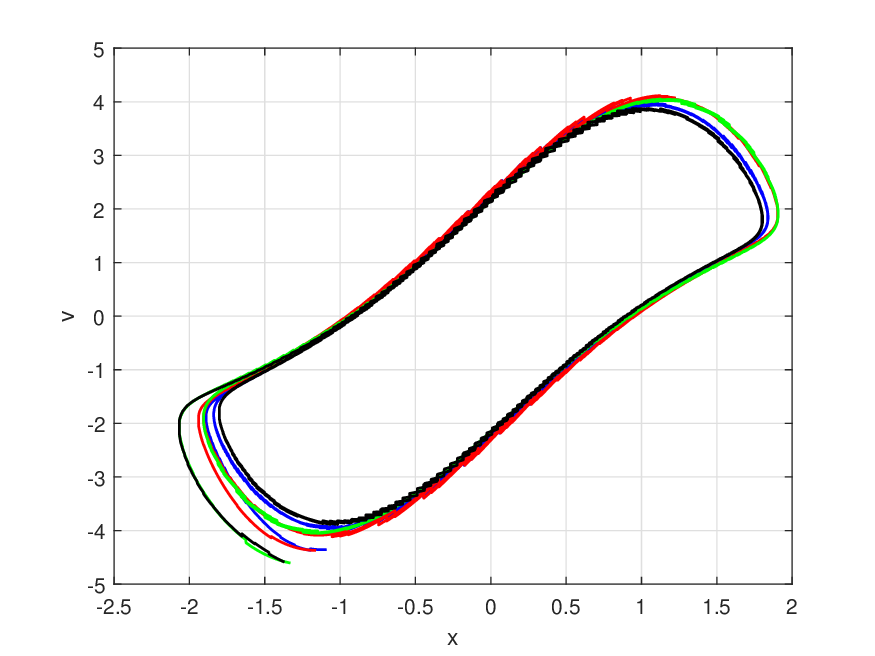}
        \caption{Phase portrait in the $\omega_i(t)$-$\xi_i(t)$ plane, when using small-talk updates with $\tau=0.033\mathrm{s}$ and $\mu=3$. Blue: Oscillator 1. Red: Oscillator 2. Green: Oscillator 3. Black: Oscillator 4.}
        \label{fig:disagree4}
    \end{subfigure}
    \hfill
    \begin{subfigure}{0.48\textwidth}
        \centering
        \psfrag{t}[][b]{\scriptsize $t \hspace{1mm} \mathrm{[s]}$}
        \psfrag{z}[][t]{\scriptsize $||\zeta(t)||$}
        \includegraphics[width=0.65\linewidth]{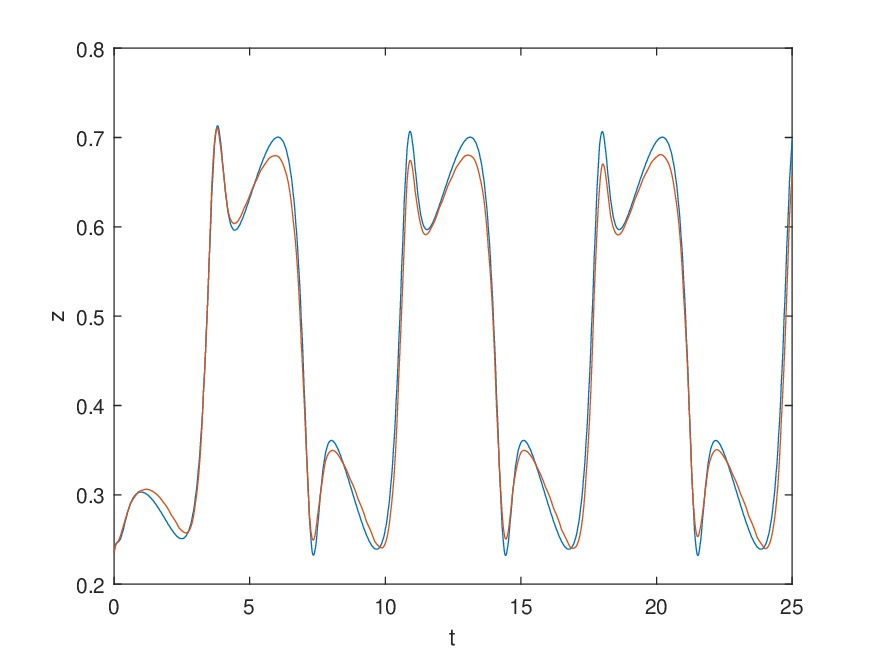}
        \caption{Comparison of the norm of the disagreement for two cases: Blue: Regular updates every $0.1 \mathrm{s}$. Orange: Small-talk updates every $0.05\mathrm{s}$, $\mu=6$.}
        \label{fig:disagree2}
    \end{subfigure}

    \caption{Analysis of the effect of small-talk updates. a) Used interaction network. b) Phase portrait of an example with small-talk updates. c) Additional small-talk updates reduce the maximal disagreement for the same regular update rate. d) Approximately same maximal disagreement, while small-talk updates reduce the communication by $25\%$. }
    \label{fig:disagree}
\end{figure}

\begin{table}[h]
\centering
\caption{Convergence of the peak of the disagreement norm $||\zeta(t)||$ with increasing number of local updates $\mu$ per global update interval ($t_g = 0.1$) for the same example as in Fig.~\ref{fig:disagree1}. The local update time is $\tau = \frac{t_g}{\mu}$. As $\mu$ increases, the maximal disagreement decreases and converges toward a lower bound.}
\label{tab:convergence}
\begin{tabular}{c||c|c|c|c|c|c|c|c}
$\mu$ & 1 & 2 & 4  & 8 & 10 & 15 & 20 \\
\hline
$\max(||\zeta(t)||)$ & 0.71 & 0.51 & 0.42 & 0.38 & 0.37 & 0.36 & 0.36 \\
\end{tabular}
\end{table}

\section{CONCLUSIONS}

This paper has investigated time-triggered practical synchronization of networked oscillators with two distinct communication frequencies and communication partners: In addition to regular updates involving the entire network, intermediate small-talk updates on a subgraph were introduced. This multi-rate update mechanism allows for maintaining a prescribed synchronization error while reducing communication effort or improving synchronization performance.

Sufficient conditions were derived to guarantee that an initial upper bounded distance to the synchronized behavior remains bounded over time. Based on the derived conditions the small-talk update interval and frequency, as well as the selection of the subgraph (on which the updates are performed) can be chosen.

Future work will focus on the optimal selection of subgraphs for small-talk updates in order to further improve performance and communication efficiency. Moreover, the extension to multiple subgraphs with heterogeneous update frequencies represents a promising direction for enhancing the flexibility and scalability of the proposed approach.

%\addtolength{\textheight}{-12cm}   % This command serves to balance the column lengths
                                  % on the last page of the document manually. It shortens
                                  % the textheight of the last page by a suitable amount.
                                  % This command does not take effect until the next page
                                  % so it should come on the page before the last. Make
                                  % sure that you do not shorten the textheight too much.

%%%%%%%%%%%%%%%%%%%%%%%%%%%%%%%%%%%%%%%%%%%%%%%%%%%%%%%%%%%%%%%%%%%%%%%%%%%%%%%%
\section*{APPENDIX}

\subsection{Proof of Lemma~\ref{l:LambdaL}}
\begin{proof}
    Since the matrices $P$ and $L_{11}$ commute and are symmetric, the matrix $\Lambda_l$ is symmetric and thus orthogonally diagonalizable according to the spectral theorem as per \cite{horn2012matrix}. \\
    Owing to $z^T L_{11} z = \sum_{i = 1}^{N_1}\sum_{j = 1}^{N_1} [A]_{ij} (z_i-z_j)^2 + \sum_{i = 1}^{N_1}\sum_{j = N_1+1}^{N} [A]_{ij} z_i^2$ for any $z \in \mathbb{R}^{N_1}$, the matrix $L_{11}$ is positive definite and can be rewritten using the Cholesky decomposition \cite{horn2012matrix}. With the fact that the matrix $P$ has one eigenvalue equal to zero and $N_1-1$ eigenvalues equal to one, Sylvester's law of inertia shows that the matrix product $PL_{11}$ has one eigenvalue equal to one and $N_1-1$ positive eigenvalues. From the similarity transformation involving the orthogonal matrix $[R, \frac{1}{\sqrt{N}1_N}]$, it becomes evident that $R^T diag(PL_{11}, 0) R$ contains the $N_{1}-1$ positive eigenvalues of $PL_{11}$ and has $N_2$ eigenvalues equal to zero. Therefore, the matrix $\Lambda_l = I_{N_1} - \varphi R^T diag(PL_{11}, 0)R$ has $N_2$ eigenvalues equal to one, while the remaining $N_1-1$ eigenvalues are of the form $1-\varphi \lambda_i(PL_{11})$, $i \in \{1, ..., N_1-1\}$. Due to $\lambda_{max} (PL_{11}) = \lambda_{max} (PL_{11}P) \leq \max_{z \neq 0} \frac{z^T L_{11} z}{\|x\|^2} = \lambda_{max} (L)$ and $\lambda_{max}(L_{11}) \leq \lambda_N(L)$ obtained by Cauchy's eigenvalue interlacing theorem according to \cite{horn2012matrix}, the remaining $N_1-1$ eigenvalues lie within $(0,1)$ if $0 < \varphi < \frac{1}{\lambda_{N}(L)}$ holds.       
\end{proof}

\subsection{Proof of Lemma~\ref{l:highLC}}
\begin{proof}
    Identically to $\xi$, the global state vector $\omega$ can be partitioned into an average component $\chi$ and a component $\tilde{\chi}$ describing the disagreement using the matrix $R$. With this in mind, the distance of an arbitrary point $P = (\chi, \tilde{\chi}, s)$ to the limit cycle $\mathcal{A}$ can be expressed as follows:
    \begin{gather}
    \begin{split}
        | (\omega, s) |_{\mathcal{A}}
        & = \inf_{P' = (\chi', \tilde{\chi}', s') \in \mathcal{A}} \|\vec{0P} - \vec{0P'}\|  \\
        & = \inf_{P' \in \mathcal{A}} \| \begin{bmatrix} 
            \chi -  \chi' \\  0 \\ s - s'
        \end{bmatrix} + 
        \begin{bmatrix}
            0 \\ \tilde{\chi} - \tilde{\chi}' \\ 0 \\
        \end{bmatrix} \|
    \end{split}
    \end{gather}
    Combining the fact that the vector $[\chi-\chi', 0^T, s-s']^T$ only lies in the $(\chi,s)$ coordinate space and the vector $[0, \tilde{\chi}-\tilde{\chi}',0]^T$ in the $\tilde{\chi}$ coordinate space with the property $1_N^T R = 0$, these coordinate spaces and thus the vectors $[\chi-\chi', 0^T, s-s']^T$ and $[0, \tilde{\chi}-\tilde{\chi}',0]^T$ are orthogonal. Taking this into account, the squared distance $|(\omega,s)|_\mathcal{A}^2$ can be written as: 
    \begin{gather} \label{eq:squared_dist}
    \begin{split} 
        |(\omega,s)|_\mathcal{A}^2 &= \inf_{P' \in \mathcal{A}}( \| \begin{bmatrix} \chi-\chi' \\ s-s' \end{bmatrix} \|^2 + \| \tilde{\chi}-\tilde{\chi}' \|^2) \\
        &= \inf_{(\chi',s')\in \mathcal{A_0}, \: \tilde{\chi}' = 0} (\| \begin{bmatrix} \chi-\chi' \\ s-s'\end{bmatrix}\|^2 + \|\tilde{\chi}-\tilde{\chi}'\|^2) \\
        &= |(\chi,s)|^2_{\mathcal{A}_0} + \|\chi\|^2.
    \end{split}
    \end{gather}
    For the dynamical system $F(\omega,s,0)$, define the Lyapunov function: $V_T: \Omega_T \to \mathbb{R}_{\geq 0}, V_T(\chi, \tilde{\chi},s) = W_T(\chi,s) + k\|\tilde{\chi}\|^2$ with: 
    \begin{gather}
        \Omega_T = \{\begin{bmatrix}
        s & \chi & \tilde{\chi}
        \end{bmatrix}^T \in \mathbb{R}^N: V_T(s,\chi,\tilde{\chi}) < \min(c, kC^2)\}
    \end{gather}
    and $W_T$ meeting the conditions: 
    \begin{align}
        &c_1 |(\chi,s)|_{\mathcal{A}_0} \leq W_T(\chi,s) \leq c_2 |(\chi,s)|_{\mathcal{A}_0} \\ 
        &\dot{W}_T (\chi,s)|_{\text{along } [\dot{\chi}, \dot{s}]^T} \leq -c_3 |(\chi,s)|_{\mathcal{A}_0} \\
        & \| \nabla W_T(\chi,s) \| \leq c_4 |(\chi,s)|_{\mathcal{A}_0}
    \end{align}
    on the set $\{\begin{bmatrix} \chi & s \end{bmatrix}^T \in \mathbb{R}^2: W_T(\chi,s) \leq c \}$, according to \cite{tanwani2021lyapunov} .
    Combining these properties with \eqref{eq:squared_dist}, it becomes evident that the Lyapunov function $V_T(\chi, \tilde{\chi}, s)$ satisfies: 
    \begin{align}
        & \min\{ c_1,k \} |(\omega,s)|_\mathcal{A}^2 
        \leq V_T \leq \max\{ c_2,k \} |(\omega,s)|_\mathcal{A}^2 \\
        & \dot{V}_T(\chi,\tilde{\chi},s) \leq  - \frac{\min\{\frac{c_3}{2},k\}}{\max\{ c_2,k\}} V_T(\chi, \tilde{\chi},s). \label{eq:bound_dot_VT}
    \end{align}
    Applying the comparison lemma  \cite{khalil2002nonlinear} to \eqref{eq:bound_dot_VT} yields: 
    \begin{gather}
        V_T \leq \exp(- \frac{\min\{ \frac{c_3}{2},k\}}{\max \{ c_2,k\}} (t-t_0)) V_T(\chi(t_0), \tilde{\chi}(t_0),s(t_0)).
    \end{gather}
    This inequality is equivalent to 
    \begin{gather}
    \begin{split}
        &|(\omega,s)|_\mathcal{A} \leq \\ &\exp(- \frac{\min\{ \frac{c_3}{2},k\}}{2\max \{ c_2,k \} } (t-t_0)) \sqrt{\frac{\max \{ c_2,k\}}{\min \{ c_1,k\}}} |(\omega(t_0),s(t_0)|_\mathcal{A},
    \end{split}
    \end{gather}
    what implies that $\mathcal{A}$ as per \eqref{eq:globalLC} is locally exponentially stable.    
\end{proof}

\subsection{Proof of Lemma~\ref{l:GammaG}}
\begin{proof}
 $\Lambda_g$ is a symmetric, positive definite and Schur stable matrix \cite{tanwani2021lyapunov}. Since $\tilde{\Lambda}_g = V^T \Lambda_g V$ is similar to $\Lambda_g$, $\tilde{\Lambda}_g$ preserves these properties and thus $I_{N-1} - \tilde{\Lambda}_g \succ 0$ holds, which implies: 
    \begin{gather} \label{eqq:matrix_prop}
    (1- \|V_u \Lambda_g V_u^T\|)(1-\|V_s^T \Lambda_g V_s\|) - \|V_s^T \Lambda_g V_u\|^2 > 0.
    \end{gather} 
    By means of Cauchy's eigenvalue interlacing theorem according to \cite{horn2012matrix}, the relations 
    \begin{align}
    & 0 < \lambda_{min}(V_s^T \Lambda_g V_s) \leq \lambda_{max}(V_s^T \Lambda_g V_s) < 1, \\
    & 0 < \lambda_{min}(V_u^T \Lambda_g V_u) \leq \lambda_{max}(V_u^T \Lambda_g V_u) < 1
    \end{align}
    follow. Combining these relations with \eqref{eqq:matrix_prop} yields $|\gamma_0| < 1$ and $|\gamma_1| < 1 + \gamma_0$ for the coefficients of the characteristic polynomial $P_{\Gamma_g}(s) = s^2 + \gamma_1 s + \gamma_0$ of $\Gamma_g$. Thus, $\Gamma_g$ is a  Schur stable matrix according to the Schur-Cohn criterion \cite{antsaklis2006linear}.   
\end{proof}
\bibliographystyle{IEEEtran}
\bibliography{lit}

\end{document}